\documentclass[aps,pra,twocolumn,floatfix,groupedaddress,superscriptaddress,footinbib,notitlepage,showkeys,showpacs,nofootinbib]{revtex4-1}
\usepackage{graphicx,graphics,times,bm,bbm,bbold,amssymb,amsmath,amsfonts,dsfont,color,xcolor,dcolumn}
\usepackage[bookmarksnumbered, colorlinks,plainpages]{hyperref}
\usepackage{diagbox}
\usepackage{multirow}
\usepackage{mathtools}
\usepackage{graphicx}
\usepackage{caption}
\usepackage{subcaption}
\usepackage{enumitem}
\captionsetup{justification=justified,singlelinecheck=false}

\hypersetup{colorlinks,linkcolor=blue,citecolor=blue,urlcolor=blue}
\urlstyle{same}

\usepackage{amsthm}


\newcommand{\ket}[1]{|#1\rangle}
\newcommand{\bra}[1]{\langle #1|}
\newcommand{\braket}[2]{\langle#1|#2\rangle}
\newcommand{\Id}{{\mathbb I}}
\newcommand{\Tr}{{\textrm {Tr}}}

\newcommand{\e}{{\mathrm {e}}}
\newcommand{\diag}{{\mathrm {diag}}}

\theoremstyle{definition}

\newtheorem{theorem}{Theorem}
\newtheorem{lemma}[theorem]{Lemma}
\newtheorem{proposition}[theorem]{Proposition}

\newtheorem{remark}[theorem]{Remark}
\newtheorem{definition}[theorem]{Definition}
\setcounter{secnumdepth}{3}
\setcounter{tocdepth}{3}

\begin{document}

\title{
Mutually equibiased bases
}
\

\author{Seyed Javad Akhtarshenas}
\email{akhtarshenas@um.ac.ir}
\thanks{Corresponding author}
\affiliation{Department of Physics, Faculty of Science, Ferdowsi University of Mashhad, Mashhad, Iran}
\author{Saman Karimi}
\affiliation{Department of Physics, Faculty of Science, Ferdowsi University of Mashhad, Mashhad, Iran}
\author{Mahdi Salehi}
\affiliation{Department of Physics, Faculty of Science, Ferdowsi University of Mashhad, Mashhad, Iran}

\begin{abstract}
In the framework of mutually unbiased bases (MUBs),  a measurement in one basis gives  \emph{no information} about the outcomes of measurements in another basis. Here, we relax the  no-information condition by allowing the $d$ outcomes to be  predicted according to a predefined probability distribution $q=(q_0,\ldots,q_{d-1})$. The  notion of  mutual unbiasedness, however, is preserved by requiring that the extracted information is  the same for any preparation and any measurement; regardless of which state from which basis is chosen to prepare the system, the outcomes of measuring the system with respect to the other basis generate the same probability distribution. In  light of this, we define the notion of \emph{mutually equibiased bases} (MEBs) such that  within each basis the states are equibiased with respect to the states of the other basis and that the bases are mutually equibiased with respect to each other. For $d=2,3$, we derive a  set of $d+1$ MEBs. The mutual equibiasedness imposes  nontrivial constraints on the  distribution $q$, leading for $d=3$ to the restriction  $1/3\le\mu \le 1/2$ where $\mu=\sum_{k=0}^{2}q_k^2$.
To capture the incompatibility of the measurements in MEBs, we derive an inequality for the probabilities of projective measurements in a qudit system, which yields an associated entropic uncertainty inequality.  Finally, we  construct a  class of positive maps and their associated  entanglement witnesses based on MEBs.
While an entanglement witness constructed from MUBs is generally finer than one based on MEBs when both use the same number of bases, for certain values of the index $\mu$, employing a larger set of MEBs can yield a finer witness. We illustrate this behavior using isotropic states of a $3\times 3$ system. Our results reveal that not all bases in a set of $L$ MEBs can contribute to the entanglement detection. A  constraint, dependent on the probability distribution $q$, limits the number of MEBs that can be  employed.

\end{abstract}
\keywords{mutually unbiased bases; mutually equibiased bases; entropic uncertainty relation; entanglement witness}

\maketitle

\section{Introduction}\label{Sec-Introduction}
Quantum mechanics is full of mysteries, and perhaps the most profound is the existence of incompatible measurements, which imposes fundamental limitations on the precision of measurement outcomes through uncertainty and complementarity \cite{HeisenbergZfP1927,BohrNature1928}. This restricts the amount of information that can be extracted from measurements on a system, and it makes sense, therefore, that the more incompatible two measurements are, the greater the restrictions on the precision with which certain properties of a system can be simultaneously known.  The uncertainty relation quantifies how this incompatibility restricts the precision with which we can simultaneously measure incompatible observables.  Related to this, mutually unbiased bases (MUBs) play a crucial role, as measurements in one basis provide \emph{no information} about the outcomes of measurements in another basis \cite{SchwingerPNAS1960,WoottersAP1989,IvanovicJPA1981}.

In the $d$-dimensional Hilbert space $\mathcal{H}$, the orthonormal  bases $\{\ket{e_k^{(\alpha)}}\}_{k=0}^{d-1}$, numbered by $\alpha=1,\ldots, L$, are  said to be mutually unbiased if and only if the transition probability from any state of one set to any state of the second set is equal to $1/d$, irrespective of the chosen state, i.e.,  $|\braket{e_k^{(\alpha)}}{e_l^{(\beta)}}|^2=1/d$ for $\alpha\ne \beta$. For a comprehensive review of MUBs and their applications, see   \cite{ZyczkowskiIJQI2010} and for a review of MUBs in composite systems, see \cite{McNultyArxiv2024}.  In view of this, the set of projectors  $P_k^{(\alpha)}=\ket{e_k^{(\alpha)}}\bra{e_k^{(\alpha)}}$ for $k=0,\ldots,d-1$, forms a projective measurement on the $\alpha$th basis and the mutually unbiasedness implies  that
\begin{equation}\label{MUB1}
\Tr[P_k^{(\alpha)}P_l^{(\beta)}]=1/d+\delta_{\alpha\beta}(\delta_{kl}-1/d),
\end{equation}
for $k,l=0,\ldots,d-1$ and $\alpha,\beta=1,\ldots,L$.  Accordingly, if a system is prepared in a state belonging to one of the bases, then the outcomes of the measuring the system with respect to the other basis give, by no means, any information about the state of the system, i.e., all outcomes are predicted to occur with equal probability.

For a $d$-dimensional system, the number of MUBs is at most $d+1$, accordingly  $L\le d+1$. However, except for the particular cases where the dimension $d$ is a prime or a power of a prime number, for which there exist a complete set of $d+1$ MUBs, the maximum number of MUBs in other dimensions is not known.  For a general dimension $d$, it is known that at least three MUBs always exist.  It is shown that a  complete set of MUBs, if exists, forms a minimal and optimal set of projective measurements for quantum state determination \cite{WoottersAP1989,IvanovicJPA1981}. The mutual unbiasedness implies that the statistical errors are minimized when finite samples are measured \cite{WoottersAP1989,ZyczkowskiIJQI2010,RenesJMP2004}.

By generalizing the notion of unbiasedness from projective measurements to more general positive operator-valued measures,  Kalev and Gour \cite{KalevNJP2014} introduced the  notion of mutually unbiased measurements, which  encompass  a broader class of measurements that retain the essential features of unbiasedness.
Mutually unbiased bases and measurements have found important applications  in quantum information,  serving both as effective methods for entanglement detection \cite{SpenglerPRA2012,ChruscinskiPRA2014,ChenPRA2014,ChruscinskiPRA2018,LiIJTP2019,SalehiQIP2021},  and as foundational frameworks for entropic certainty and uncertainty relations \cite{MaassenPRL1988,ColesRMP2017,BallesterPRA2007,MolmerPRA2009,WehnerNJP2010,SanchezPLA1993,ZyczkowskiPRA2015,HuangPRA2024,HuangPRResearch2024}.
Whereas entropic uncertainty relations set lower bounds on average entropies, reflecting the impossibility of a state being a simultaneous eigenstate of all observables, entropic certainty relations set upper bounds, arising from the fact that, apart from the completely mixed state, no quantum state can produce $d$ equally likely outcomes for every observable at once \cite{SanchezPLA1993}.
For the average entropy characterizing measurements of a pure quantum state of dimension $d$ in $L$ orthogonal bases, both lower and upper bounds have been established in  \cite{ZyczkowskiPRA2015}.

Here we extend this notion of MUBs to   define mutually equibiased bases (MEBs) such that the standard MUBs are obtained as a special limiting case of our results. To this aim, the unbiasedness condition is partially relaxed  for MEBs in  the sense that the transition probability from any state of one basis to any state of the second basis is no longer necessarily equal to $1/d$.     In the most general setting of quantum measurement, if a system is prepared in a state belonging to one basis, the statistics of the outcomes of measuring the system with respect to the states of the other basis provides a   probability distribution  ${q}=(q_0,\ldots,q_{d-1})$.   We adapt this notion of biasedness; however, our restriction is that regardless of which state of the first basis is chosen, the probability distribution of the transition to  each state of the second basis remains the same up to a permutation of its entries. Accordingly,  within each basis the states are equibiased with respect to the states of the other basis and that the bases are mutually equibiased with respect to each other. In  light of this,  we associate a probability distribution ${q}=(q_0,\ldots,q_{d-1})$ with each of  MEBs such that
\begin{equation}\label{MEB1}
\Tr[P_k^{(\alpha)}P_l^{(\beta)}]=q_{\sigma^(k)\oplus \sigma(l)}+\delta_{\alpha\beta}(\delta_{kl}-q_{\sigma(k)\oplus \sigma(l)}),
\end{equation}
which reduces to Eq. \eqref{MUB1} of MUBs when $q_k=1/d$ for $k=0,\ldots,d-1$. Here, $\oplus$ stands for the sum module $d$, and  $\sigma(k)$ denotes the permutation of $k\in (0,1,\ldots,d-1)$.

In MEBs, in contrast to  MUBs,  the knowledge of  measuring the system with respect to the one basis, when the system is prepared in any state of the other basis,  leaks some information about the state of the system.  The mutual equibiasedness of the bases implies, however,  that they are biased in the same degree  in the sense that regardless of which state from which basis is chosen to prepare the system, the extracted  information is the same, i.e., they are predictable in the same degree with respect to each other.
It therefore makes sense  that the question of how much information can be extracted should be related to how far the probability distribution is from a homogenous distribution.
The index of coincidence, defined by $\mu(q)=\sum_{k=0}^{d-1}q_{k}^2$ for the probability  $q=(q_0,\ldots,q_{d-1})$, is a measure of how much a probability distribution deviates from a uniform distribution; as such, it can play a key role when the extracted information is quantified. Hereafter, for brevity, we will refer to the index of coincidence simply as the index  and denote it by $\mu$ for a $d$-outcome probability distribution $q=(q_0,\ldots,q_{d-1})$.

Related to the index $\mu$, we  now pose the  question of whether, for a given $d$, a  set of $d+1$  MEBs exists for all values of  $\mu\in[1/d, 1]$.
We derive a  set of $d+1$  MEBs for two cases $d=2,3$.
Our results show that except for the particular case of two-dimensional system for which a  set of three  MEBs exists for all $\mu\in[1/2,1]$, it is not the case for  $d\ge 3$.  More precisely,  for  $d=3$, the constraints imposed by the mutual equibiasedness limits the range of the index $\mu$ to the feasible region  $1/3\le \mu \le 1/2$ within which a set of four MEBs exists.

Quantum incompatibility  imposes limitation on  the information that can be extracted from measurements of different observables, and the entropic uncertainty relation provides a bound on how much information can be gained  from  such measurements.
To capture this notion of incompatibility  in MEBs, we derive an inequality for the probabilities of projective measurements in a qudit system, from which we obtain  an  entropic uncertainty inequality.

Among the various applications of MUBs, their role in entanglement detection is particularly important. Similarly, MEBs can also be used for identifying entanglement. The most general tool to detect entanglement  is the so-called entanglement witness, and a powerful tool  to construct entanglement witness is to use positive but not completely positive maps. Based on MEBs, we  construct a  class of positive maps and their associated  entanglement witnesses.

The paper is organized as follows. In Sec. \ref{Sec-MEBs} we briefly review the relevant concepts of bistochastic and unistochastic matrices, along with a parameterization for probability distribution ${q}=(q_0,\ldots,q_{d-1})$.  We then  introduce the notion of MEBs and  present a complete derivation of MEBs for the two simplest cases, $d=2,3$. In Sec. \ref{Sec-Entropic}, we derive   an inequality for the probabilities of projective measurements in MEBs of a qudit system. Based on this, we provide  an  entropic uncertainty inequality in MEBs. In Sec.  \ref{Sec-EntDetection}, we construct  a positive map based on MEBs and  then introduce  an entanglement witness and examine its performance  for two-qutrit isotropic states. We conclude the paper in Sec. \ref{Sec-Conclusion} with a brief summary. Some proofs of the results  are provided in Appendices \ref{Appendix-ProofProposition} and   \ref{Appendix-Eigenvalues-G}.

\section{Mutually equibiased bases (MEB\lowercase{s})}\label{Sec-MEBs}
\subsection{Preliminaries}\label{SubSection-Preliminaries}
\subsubsection{Bistochastic and unistochastic matrices}\label{SubsubSection-Bistochastic}
We begin this section by defining the notion of mutually equibiased bases, which generalizes the concept of mutually unbiased bases to encompass a broader family of  probability distributions. We will see that the allowed region of this family is shaped   by the mutually equibiased condition.
\begin{definition}
Let  $q=(q_0,\ldots,q_{d-1})$ be a probability  distribution over  $d$ outcomes. We say that the  set of $L$ orthonormal bases $\{\mathcal{B}^{(1)}, \ldots,\mathcal{B}^{(L)}\}$  of the $d$-dimensional Hilbert space  $\mathcal{H}$  is MEBs associated with    the probability  distribution $q$  if  for every pair $\alpha\ne \beta$  the corresponding  projection operators  satisfy Eq. \eqref{MEB1} for some permutation $\sigma$. Here   $\mathcal{B}^{(\alpha)}=\{\ket{e_k^{(\alpha)}}\}$ where $\alpha=1,\ldots,L$ and $k=0,\ldots,d-1$.
\end{definition}

The definition of MEBs relies on a probability distribution $q=(q_0,\ldots,q_{d-1})$, from which one can define the $d\times d$ matrix $\mathcal{Q}$ with entries $\mathcal{Q}_{kl}=q_{k\oplus l}$. Clearly, $\mathcal{Q}$ is bistochastic; its  entries  are nonnegative with each of its rows and columns summing to unity, i.e.,
\begin{eqnarray}
\mathcal{Q}_{kl}\ge 0, \qquad \sum_{k=0}^{d-1}\mathcal{Q}_{kl}=1, \qquad \sum_{l=0}^{d-1}\mathcal{Q}_{kl}=1.
\end{eqnarray}
More precisely,  the set of matrices $\mathcal{Q}$ forms a  $(d-1)$-dimensional subclass of the bistochastic matrices of size $d\times d$, which themselves comprise a convex set of dimension $(d-1)^2$, known as  Birkhoff's polytope \cite{ZyczkowskiCMP2005}.
A bistochastic matrix $\mathcal{Q}$ is called unistochastic matrix if and only if its entries are the squares of the absolute values of the entries of some unitary matrix.
In addition to the normalization requirement, i.e., in each row and column of a unitary matrix the squared moduli sum to unity, the moduli of its entries must also satisfy certain constraints arising from the mutual orthogonality of the rows and columns.  Accordingly, not all bistochastic matrices arise from unitary matrices; nonetheless,  the $d\times d$ bistochastic matrix $\mathcal{Q}$,  associated with the transition probabilities between two orthonormal bases  $\mathcal{B}^{(\alpha)}$ and $\mathcal{B}^{(\beta)}$   is a unistochastic matrix. The  characterization of  the unistochastic subset of Birkhoff's polytope, i.e., that for which there exist a unitary matrix $\mathcal{U}$ such that    $\mathcal{Q}_{kl}=|\mathcal{U}_{kl}|^2$, is considered in \cite{ZyczkowskiCMP2005}.

To set the convention, suppose $\mathcal{B}^{(1)}=\{\ket{e_k^{(1)}}\}_{k=0}^{d-1}$ denotes the standard (computational) basis of the $d$-dimensional Hilbert space $\mathcal{H}$.
Then, for each    $\beta\in\{2,\ldots,L\}$, the orthonormal basis $\mathcal{B}^{(\beta)}$ is related to $\mathcal{B}^{(1)}$ via a unitary matrix $\mathcal{U}^{(\beta)}$ as  $\ket{e_k^{(\beta)}}=\mathcal{U}^{(\beta)}\ket{e_k^{(1)}}$, for $k=0,\ldots,d-1$, where   $\mathcal{U}^{(\beta)}_{kl}=\braket{e_k^{(1)}}{e_l^{(\beta)}}$.
By equibiasedness, we have to assign    $|\mathcal{U}_{kl}^{(\beta)}|^2=\mathcal{Q}_{kl}$, for all $\beta\in\{2,\ldots,L\}$,    where the $d\times d$ bistochastic matrix $\mathcal{Q}$ is defined by  $\mathcal{Q}_{kl}=q_{k\oplus l}$ for the  probability distribution $q=(q_0,\ldots,q_{d-1})$.
An important question is to determine for which probability distributions $q$  the associated  bistochastic matrix $\mathcal{Q}$ is unistochastic, that is,  which  $q$ gives rise to unistochastic matrix  $\mathcal{Q}$ corresponding to  unitary matrices $\mathcal{U}^{(\beta)}$ such that $\mathcal{Q}_{kl}=|\mathcal{U}_{kl}^{(\beta)}|^2$.

Before we proceed further, let us first note that for a given unistochastic matrix $\mathcal{Q}$, if one unitary matrix $\mathcal{U}$ is found such that $\mathcal{Q}_{kl}=|\mathcal{U}_{kl}|^2$, then there exist infinitely many unitary matrices   that satisfy this condition.
This follows from the fact that the left and right multiplication of  a unitary matrix $\mathcal{U}$ by any diagonal unitary matrices  $\mathcal{D}_1$ and $\mathcal{D}_2$ results in another unitary matrix $\mathcal{V}=\mathcal{D}_1\mathcal{U}\mathcal{D}_2$, but the associated unistochastic matrix remains unchanged.
This defines an equivalence relation as $\mathcal{U}\sim\mathcal{V}=\mathcal{D}_1\mathcal{U}\mathcal{D}_2$ \cite{ZyczkowskiJMP2009}.
Note, however, that the  equivalence is partial in general;  not any pair of unitary matrices $\mathcal{U}$ and $\mathcal{V}$ generating the same unistochastic matrix $\mathcal{Q}$ can be written as $\mathcal{V}=\mathcal{D}_1 \mathcal{U}\mathcal{D}_2$ for some diagonal unitary matrices $\mathcal{D}_1$ and $\mathcal{D}_2$.  For  $d=2$, the correspondence between a unitary matrix and its
unistochastic image is one to one up to the multiplications from the  left and right  by diagonal unitary matrices.
For higher dimensions,  the equivalence is not complete, meaning that  inequivalent unitaries   can generate the same unistochastic matrix. An  example occurs in
dimension $d=4$ for a one-parameter class of complex Hadamard matrices  \cite{ZyczkowskiOSID2006}. For $d=3$,  to the best of our knowledge,   there is no general proof that inequivalent unitaries cannot exist, nor any counterexample indicating that such inequivalent unitaries do exist.

What is crucial in our analysis, however,  is the requirement of the \emph{mutual} equibiasedness. This condition imposes a severe restriction on the set of MEBs, making it finite and constrained.
To be more precise,  suppose  $\mathcal{Q}$ is a unistochastic matrix generated by  $L-1$ unitary matrices  $\mathcal{U}^{(\beta)}$ such that $\mathcal{Q}_{kl}=|\mathcal{U}_{kl}^{(\beta)}|^2$ for each    $\beta\in\{2,\ldots,L\}$.  The mutually equibiasedness requires further that  they are mutually unistochastic, i.e.,  $|[{\mathcal{U}^{(\alpha)}}^\dagger\mathcal{U}^{(\beta)}]_{kl}|^2=\mathcal{Q}_{\sigma(k)\sigma(l)}$ for all $\alpha,\beta\in\{2,\ldots,L\}$ with $\alpha\ne \beta$.
The permutation $\sigma$, appearing here and in Eq. \eqref{MEB1}, depends on $\alpha$ and $\beta$ in general, however, when one of the bases corresponds to the standard basis, we follow the convention that the permutation $\sigma$ does nothing, i.e., $\sigma(k)=k$ and $\sigma(l)=l$.

\subsubsection{Parametrization of  the distribution $q=(q_0,\ldots,q_{d-1})$}\label{Section-parameterization}
For an arbitrary  $d$-outcome probability distribution $q=(q_0,\ldots,q_{d-1})$, we define  a $(d-1)-$dimensional Hermitian matrix $N$ such that it has nonzero  entries  only  along the antidiagonal, i.e.,  $N_{kk^\prime}\ne 0$ if and only if  $k\oplus k^\prime=d$. More precisely, let   $N_{kk^\prime}=\sum_{m=0}^{d-1}\omega^{k^\prime m}q_{m} =\sum_{m=0}^{d-1}\omega^{-k m}q_{m}$ so that for nonzero entries we have
\begin{eqnarray}\label{N-k-d-k}
N_{k,d-k}=\sum_{m=0}^{d-1}\omega^{-k m}q_{m},
\end{eqnarray}
for $k=1,\ldots,d-1$ where $\omega=\e^{2\pi i/d}$.

The matrix $N$ captures the non-uniformity of the probability distribution $q=(q_0,\ldots,q_{d-1})$ in the sense that $N=0$ if and only if the underlying distribution is uniform, i.e., $q=(1/d,\ldots,1/d)$. Clearly,   $N$  has $\lceil \frac{d-1}{2} \rceil$  independent   entries, where  $\lceil \frac{d-1}{2} \rceil$ denotes the smallest integer greater than or equal to $\frac{d-1}{2}$, that is, it equals $\frac{d-1}{2}$ if $d$ is odd and $\frac{d}{2}$ if $d$ is even. All entries are complex, except for   $N_{d/2,d/2}$ which is real  and occurs  when $d$ is even.   With this notation,  the $(d-1)-$dimensional  matrix  $N$ is fully determined  by $d-1$ real independent parameters. Accordingly, by writing $N_{k,d-k}=\delta_{2k-1}+i\delta_{2k}$, we have for  the $(d-1)$ real  parameters $\delta_{1},\ldots,\delta_{d-1}$
\begin{eqnarray}\label{delta-odd}
\delta_{2k-1}&=&\sum_{m=0}^{d-1}\cos\bigg[\frac{2\pi k m}{d}\bigg]q_{m}, \\ \label{delta-even}
\delta_{2k}&=& -\sum_{m=0}^{d-1}\sin\bigg[\frac{2\pi k m}{d}\bigg]q_{m},
\end{eqnarray}
for $k=1,\ldots,\lceil \frac{d-1}{2} \rceil$. This parameterization is particularly useful to quantify the extent to which the probability distribution $q=(q_0,\ldots,q_{d-1})$ deviates from the uniform distribution $q=(1/d,\ldots,1/d)$.
To illustrate  the nature of this parameterization, we  explicitly consider two cases of $d=2$ and $d=3$.

For $d=2$,   the only nonzero entry of $N$ is given by $N_{11}=\delta$, where $\delta=q_{0}-q_{1}$  such that $-1\le \delta\le 1$.
It follows therefore that
\begin{eqnarray}\label{delta-q0q1}
q_0=(1+\delta)/2, \quad q_1=(1-\delta)/2.
\end{eqnarray}

For $d=3$, the nonzero entries of $N$ are given by $N_{12}=N_{21}^\ast=\delta_1+i\delta_2$. Two independent parameters are given by
\begin{eqnarray}\label{delta1delta2}
\delta_1=(3q_0-1)/2, \quad \delta_2=\sqrt{3}(q_{2}-q_{1})/2,
\end{eqnarray}
where
\begin{eqnarray}
-1/2\le &\delta_1&\le 1, \\
(-1+\delta_1)/\sqrt{3}\le &\delta_2&\le (1-\delta_1)/\sqrt{3}.
\end{eqnarray}
We therefore have
\begin{eqnarray}
q_0&=&1/3+2\delta_1/3, \\
q_1&=&1/3-\delta_1/3-\delta_2/\sqrt{3}, \\
q_2&=&1/3-\delta_1/3+\delta_2/\sqrt{3}.
\end{eqnarray}

In the following subsections, we examine two special cases of $d=2$ and $d=3$, providing explicite construction of  MEBs in each case.

\subsection{Mutually equibiased bases for $d=2$}
For the  simplest case of a  two-level system, suppose $\mathcal{B}^{(1)}=\{\ket{0},\ket{1}\}$ represents the standard basis. Associated with the probability distribution ${q}=(q_0,q_1)$, or  its corresponding bistochastic matrix  $\mathcal{Q}=\begin{psmallmatrix}q_0 & q_1 \\ q_1 & q_0 \end{psmallmatrix}$,  we have to construct the unitary matrix $\mathcal{U}$ with prescribed moduli of its matrix elements. Starting with $\mathcal{U}=\begin{psmallmatrix}\sqrt{q_0} & \sqrt{q_1} \\ \sqrt{q_1} & \sqrt{q_0}\e^{i\xi} \end{psmallmatrix}$,   unitarity requires $\xi=\pi$, hence
\begin{equation}
\mathcal{U}=\begin{pmatrix}
  \sqrt{q_0} & \sqrt{q_1}  \\
  \sqrt{q_1} & -\sqrt{q_0} \\
  \end{pmatrix}.
\end{equation}
Given that  for any diagonal unitary matrices  $\mathcal{D}_1$ and $\mathcal{D}_2$, the unistochastic matrix associated with  $\mathcal{V}=\mathcal{D}_1\mathcal{U}\mathcal{D}_2$ is  the same as that associated with  $\mathcal{U}$ \cite{ZyczkowskiJMP2009}, one can construct  a second unitary matrix $\mathcal{V}$ from $\mathcal{U}$ as $\mathcal{V}=D\mathcal{U}$ where $D=\diag\{1,\e^{i\alpha}\}$. We obtain
\begin{equation}
\mathcal{V}=\begin{pmatrix}
  \sqrt{q_0} & \sqrt{q_1}  \\
  \sqrt{q_1}\e^{i\alpha} & -\sqrt{q_0}\e^{i\alpha} \\
  \end{pmatrix}.
\end{equation}
Although both $\mathcal{U}$ and $\mathcal{V}$ generate the same bistochastic matrix $\mathcal{Q}$  and both define bases that are MEBs with respect to the standard basis, they are not necessarily MEBs with respect to each other. For this to hold, $\mathcal{U}$ and $\mathcal{V}$ must be mutually equibiased, i.e., the unitary matrix $\mathcal{V}^\dagger\mathcal{U}$  must generate the same bistochastic  matrix $\mathcal{Q}$. This places a constraint on  $\alpha$ such that $q_0^2+q_1^2+2q_0q_1\cos{\alpha}=q_0$, which uniquely determines $\alpha$ as
\begin{equation}
\alpha=\cos^{-1}[\delta/(1+\delta)],
\end{equation}
with   the unbiasedness parameter $\delta=q_0-q_1$  defined by Eq. \eqref{delta-q0q1}.

Starting from the standard orthonormal basis $\{\ket{e_k^{(1)}}=\ket{k}\}$, one can apply the unitary matrices $\mathcal{U}$ and $\mathcal{V}$ to construct new bases $\{\ket{e_k^{(2)}}=\mathcal{U}\ket{e_k^{(1)}}\}$ and  $\{\ket{e_k^{(3)}}=\mathcal{V}\ket{e_k^{(1)}}\}$      for  $k=0,1$. This notion of equibiasedness admits an interesting geometric interpretation in the Bloch vector representation, in the sense that  with  $P_k^{(\alpha)}=(\Id_2+\boldsymbol{r}_k^{(\alpha)}\cdot \boldsymbol{\sigma})/2$ as the Bloch vector representation of the projection operator $P_k^{(\alpha)}$, the Bloch vectors satisfy
\begin{eqnarray}
\boldsymbol{r}_k^{(\alpha)}\cdot \boldsymbol{r}_l^{(\beta)}=
 (-1)^{k\oplus l}\;\delta,  & \mbox{if }\; \alpha\ne\beta.
\end{eqnarray}
Accordingly, the Bloch vectors form a symmetric or equiangular triple such that they are mutually symmetric with the same mutual angle  $\vartheta=\cos^{-1}[\delta]$ (see Fig. \ref{FIG-Triangle2}).
\begin{figure}[t]
\includegraphics[scale=0.4]{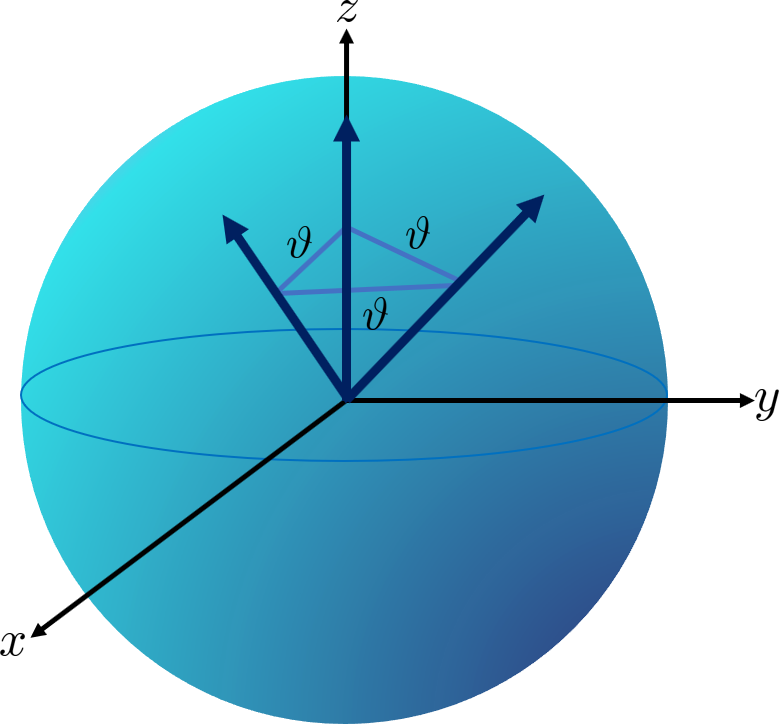}
\caption{Bloch vector representation for MEB, when $d=2$. Geometrically, equibiasedness means equal angle, $\vartheta=\cos^{-1}[\delta]$ between Bloch vectors of different bases.  }
\label{FIG-Triangle2}
\end{figure}

\subsection{Mutually equibiased bases for $d=3$}
In this case, associated with the probability distribution $q=(q_0,q_1,q_{2})$, we have to construct the unitary matrix $\mathcal{U}$ with prescribed moduli of its matrix elements as
\begin{equation}\label{U3=XiEta}
\mathcal{U}=\begin{pmatrix}
  \sqrt{q_0} & \sqrt{q_1} & \sqrt{q_2} \\
  \sqrt{q_1} & \sqrt{q_2}\e^{i\xi_{0}} & \sqrt{q_0}\e^{i\xi_{1}} \\
  \sqrt{q_2} & \sqrt{q_0}\e^{i\eta_{1}} & \sqrt{q_1}\e^{i\eta_{0}}
\end{pmatrix}.
\end{equation}
The unitarity condition $\mathcal{U}\mathcal{U}^\dagger=\Id$, i.e., the mutual orthogonality of the rows (and columns), gives simply the  conditions
\begin{eqnarray}\label{Tri1}
& &L_2+L_0\e^{i\xi_{0}}+L_1\e^{i\xi_{1}}=0, \\
& &L_1+L_2\e^{i\eta_{1}}+L_0\e^{i\eta_{0}}=0, \\ \label{Tri3}
& &L_0+L_1\e^{i(\xi_0-\eta_{1})}+L_2\e^{i(\xi_1-\eta_{0})}=0,
\end{eqnarray}
where we have defined \cite{ZyczkowskiJMP2009}
\begin{equation}\label{L0L1L2}
L_0=\sqrt{q_1q_2}, \quad L_1=\sqrt{q_2q_0}, \quad L_2=\sqrt{q_0q_1}.
\end{equation}
Equations \eqref{Tri1}-\eqref{Tri3} imply that the unitarity is fulfilled if we find the phases in such a way that three  line segments $L_0$, $L_1$, and $L_2$ form a triangle. This imposes the triangle inequality on the line segments \cite{ZyczkowskiCMP2005}
\begin{equation}\label{L-inequalities}
|L_j-L_k|\le L_i\le L_j+L_k,
\end{equation}
where  $(i,j,k)$ are different choices of   $(0,1,2)$.   For  probabilities $q=(q_0,q_1,q_2)$ for which the  above inequalities hold, Eqs.  \eqref{Tri1}-\eqref{Tri3} give the phases as
\begin{eqnarray}\label{d3-xi0}
\cos{\xi_0}&=&\frac{L_1^2-L_0^2-L_2^2}{2L_0L_2}=\cos{(\eta_0-\eta_1)}, \\ \label{d3-xi1-eta1}
\cos{\xi_1}&=&\frac{L_0^2-L_1^2-L_2^2}{2L_1L_2}=\cos{\eta_1}, \\ \label{d3-eta0}
\cos{(\xi_0-\xi_1)}&=&\frac{L_2^2-L_0^2-L_1^2}{2L_0L_1}=\cos{\eta_0}.
\end{eqnarray}
For each probability distribution $q=(q_0,q_1,q_2)$ satisfying   the chain-link conditions \eqref{L-inequalities},  the  associated unitary matrix $\mathcal{U}$,  defined in Eq. \eqref{U3=XiEta} with phases specified in Eqs. \eqref{d3-xi0}-\eqref{d3-eta0},  generates the corresponding unistochastic matrix $\mathcal{Q}$. By construction, the matrix  $\mathcal{U}$ represents the most general form consistent with $q=(q_0,q_1,q_2)$ and defines the second orthonormal basis $\{\ket{e_k^{(2)}}=\mathcal{U}\ket{e_k^{(1)}}\}$ with respect to the standard basis $\{\ket{e_k^{(1)}}\}$.
Given   $\mathcal{U}$, we have  to construct two  unitary matrices, $\mathcal{V}$ and $\mathcal{W}$ such that all bases are mutually equibiased.  The following proposition provides a   set of four MEBs for $d=3$.

\begin{proposition}\label{Proposition-d3}
Starting from the standard basis $\{\ket{e_{k}^{(1)}}\}_{k=0}^{2}$, the unitary matrices $\mathcal{U}$, $\mathcal{V}$, and $\mathcal{W}$  generate a one-parameter class  of four MEBs. Here    $\mathcal{V}=\mathcal{D}_{\alpha}\mathcal{U}$  and $\mathcal{W}=\mathcal{D}_{\beta}\mathcal{U}$, where  $\mathcal{D}_{\alpha}=\diag\{1,\e^{i\alpha_1},\e^{i\alpha_2}\}$ and $\mathcal{D}_{\beta}=\diag\{1,\e^{i\beta_1},\e^{i\beta_2}\}$, respectively, and $\mathcal{U}$ is defined  by Eq. \eqref{U3=XiEta}.
The pairs $(\alpha_{1},\alpha_{2})$ and $(\beta_{1}, \beta_{2})$ are distinct and take their  values from the  set
\begin{eqnarray}\label{Pairs-alpha}
(0,\theta^{\mu}), \qquad (\theta^{\mu},0), \qquad (\theta^{\mu},\theta^{\mu}),
\end{eqnarray}
where
\begin{eqnarray}\label{theta-mu2}
\theta^{\mu}=\cos^{-1}{\Big[\frac{-\mu}{1-\mu}\Big]},
\end{eqnarray}
for  $\mu\in[1/3,1/2]$. Furthermore, the probabilities $(q_0,q_1,q_2)$ are different permutations of $(q^{\mu}_{+}, q^{\mu}_{0}, q^{\mu}_{-})$ where
\begin{eqnarray}\label{q0-mu}
q^{\mu}_0&=&\frac{1-\mu}{2}, \\ \label{qpm-mu}
q^\mu_{\pm}&=&\frac{1}{4}\Big[(1+\mu)\pm\sqrt{(1+\mu)^2-4(1-\mu)^2}\Big].
\end{eqnarray}
\end{proposition}

The proof of the proposition is given in Appendix \ref{Appendix-ProofProposition};  here we provide a brief description of the structure of the obtained MEBs.  In Fig. \ref{FIG-Triangle-Solution}  we illustrate the one-parameter class of solutions within  the feasible region.   The feasible region, shown in yellow,   is bounded by three hypocycloids each corresponding to the saturation of one of the chain-link inequalities \eqref{L-inequalities}, and three circular arcs of radius $\mu=1/2$. The  hypocycloids are obtained from the condition that the  matrix $\mathcal{U}$ in Eq. \eqref{U3=XiEta} is unitary, generating  the unistochastic matrix $\mathcal{Q}$ with entries $\mathcal{Q}_{kl}=q_{k\oplus l}$. The three circular arcs, on the other hand, follow from the condition that the unitary matrix $\mathcal{V}^\dagger\mathcal{U}$, with $\mathcal{V}=\mathcal{D}_{\alpha}\mathcal{U}$,  must generate the same unistochastic  matrix $\mathcal{Q}$, up to a permutation of its entries. The latter imposes further constraints on the solutions, restricting them to a one-parameter class of probability distributions, as stated in the proposition.
The   class consists of  six curves, each starting at  $\mu=1/3$ at the center of the feasible region and ending  at  $\mu=1/2$ on the boundary.  The curve colored in  darker blue, for example, corresponds to  $q=(q^\mu_{+}, q^\mu_0, q^\mu_{-})$ when   $\alpha_1=0$ and $\alpha_2=\cos^{-1}[-\mu/(1-\mu)]$, and all other curves can be obtained by any permutation of its probability components.

\begin{figure}[t]
\includegraphics[scale=0.9]{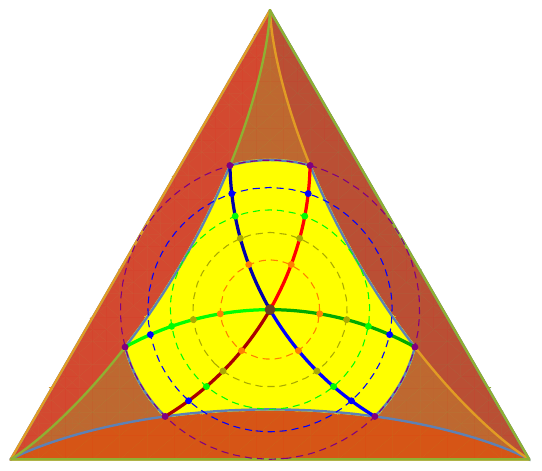}
\caption{Geometrical representation of the probability distribution $q=(q_0,q_1,q_2)$, for which  a  set of four MEBs exist. The figure is plotted in the $\delta_2\delta_1$ plane,  where  independent parameters $\delta_1$ and $\delta_2$ are defined in Eq. \eqref{delta1delta2}.  The curves represent the allowed values of the  distribution   $q=(q^\mu_{+}, q^\mu_0, q^\mu_{-})$, or any permutation of its probabilities, when   $\alpha_1=0$ and $\alpha_2=\cos^{-1}[-\mu/(1-\mu)]$.  The circles correspond to regions with constant index $\mu$. Corresponding to each $\mu$, only six points, i.e., the distribution $q$ and its permutations,  are allowed for a given $\mu$.  }
\label{FIG-Triangle-Solution}
\end{figure}

It is worth noting that, although all three unitary matrices $\mathcal{U}$, $\mathcal{V}$, and $\mathcal{W}$ generate the same unistochastic matrix $Q$, associated with  the probability distribution $q=(q_0,q_1,q_2)$, the unistochastic matrices resulting from their mutual products $\mathcal{V}^\dagger \mathcal{U}$, $\mathcal{W}^\dagger \mathcal{U}$, and $\mathcal{W}^\dagger \mathcal{V}$  are not necessary equal to $Q$. However, they are equivalent in the sense that they are generated either  by
$q=(q_0,q_1,q_2)$ or by a permutation of its components,    such as $q^{(012)}=(q_1,q_2,q_0)$ and $q^{(210)}=(q_2,q_1,q_0)$, where $(012)$ and $(210)$ are forward and backward cycles of the group of permutation of three elements $\{0,1,2\}$.   All such possibilities are summarized in Table \ref{Table-AlphaBeta1} in Appendix \ref{Appendix-ProofProposition}.

Note that, in Proposition \ref{Proposition-d3}, we assume  that any two unitary matrices $\mathcal{U}$ and $\mathcal{V}$ generating the same unistochastic matrix $\mathcal{Q}$ can be written as $\mathcal{V}=\mathcal{D}_1 \mathcal{U}\mathcal{D}_2$ for some diagonal unitary matrices $\mathcal{D}_1$ and $\mathcal{D}_2$. However,  for $d=3$,  the existence  of inequivalent unitary matrices cannot be excluded;  therefore, whether this one-parameter class is exhaustive remains an open question.

\subsubsection{Example: $d=3$ and    $\mu=1/2$}
For $\mu=1/2$ and by choosing the nonincreasing order for probabilities, i.e., $q_0=q_{+}^{\mu}$, $q_1=q_{0}^{\mu}$, and $q_2=q_{-}^{\mu}$, we find from Eqs. \eqref{q0-mu} and \eqref{qpm-mu}
\begin{eqnarray}
q_0=(3+\sqrt{5})/8,\quad q_1=1/4,\quad q_2=(3-\sqrt{5})/8.
\end{eqnarray}
Using these in Eqs \eqref{Tri1} and \eqref{d3-xi0}-\eqref{d3-eta0}, we  get $\eta_0=0$, $\eta_1=\xi_0=\xi_1=\pi$.
Equation \eqref{theta-mu2} gives $\theta^{\mu}=\pi$, which gives  for  case (a)
\begin{eqnarray}
\alpha_1=\beta_2=0,\qquad  \alpha_2=\beta_1=\pi.
\end{eqnarray}
Using these in Eq. \eqref{U3=XiEta}, we can find  $\mathcal{U}$, from which $\mathcal{V}$ and $\mathcal{W}$ can also be obtained as
\begin{equation}
\mathcal{U}=\begin{pmatrix}
  \sqrt{q_0} & \sqrt{q_1} & \sqrt{q_2} \\
  \sqrt{q_1} & -\sqrt{q_2} & -\sqrt{q_0} \\
  \sqrt{q_2} & -\sqrt{q_0} & \sqrt{q_1}
\end{pmatrix},
\end{equation}
\begin{equation}
\mathcal{V}=\begin{pmatrix}
  \sqrt{q_0} & \sqrt{q_1} & \sqrt{q_2} \\
  \sqrt{q_1} & -\sqrt{q_2} & -\sqrt{q_0} \\
  -\sqrt{q_2} & \sqrt{q_0} & -\sqrt{q_1}
\end{pmatrix},
\end{equation}
\begin{equation}
\mathcal{W}=\begin{pmatrix}
  \sqrt{q_0} & \sqrt{q_1} & \sqrt{q_2} \\
  -\sqrt{q_1} & \sqrt{q_2} & \sqrt{q_0} \\
  \sqrt{q_2} & -\sqrt{q_0} & \sqrt{q_1}
\end{pmatrix}.
\end{equation}
Obviously,  all three unitaries generate the same unistochastic matrix $\mathcal{Q}$. They are also mutually equibiased as
\begin{equation}
\mathcal{V}^\dagger\mathcal{U}=\begin{pmatrix}
  \sqrt{q_0} & \sqrt{q_1} & -\sqrt{q_2} \\
  \sqrt{q_1} & -\sqrt{q_2} & \sqrt{q_0} \\
  -\sqrt{q_2} & \sqrt{q_0} & \sqrt{q_1}
\end{pmatrix},
\end{equation}
\begin{equation}
\mathcal{W}^\dagger\mathcal{U}=\begin{pmatrix}
  \sqrt{q_1} & \sqrt{q_2} & \sqrt{q_0} \\
  \sqrt{q_2} & \sqrt{q_0} & -\sqrt{q_1} \\
  \sqrt{q_0} & -\sqrt{q_1} & -\sqrt{q_2}
\end{pmatrix},
\end{equation}
\begin{equation}
\mathcal{W}^\dagger\mathcal{V}=\begin{pmatrix}
  \sqrt{q_2} & \sqrt{q_0} & \sqrt{q_1} \\
  \sqrt{q_0} & -\sqrt{q_1} & \sqrt{q_2} \\
  \sqrt{q_1} & \sqrt{q_2} & -\sqrt{q_0}
\end{pmatrix},
\end{equation}
which generate three equivalent unistochastic matrices $\mathcal{Q}$,  $\mathcal{Q}^{(123)}$,  and $\mathcal{Q}^{(321)}$, respectively.

\section{Entropic uncertainty relation for MEB\lowercase{s}}\label{Sec-Entropic}
Before proceeding with the  main theorems  of this section,  let us first present the following  lemma.
Suppose a set of projectors  $P_j^{(\alpha)}=\ket{e_j^{(\alpha)}}\bra{e_j^{(\alpha)}}$, for $j=0,\ldots,d-1$, forms a projective measurement on the $\alpha$th basis of the $d$-dimensional system $\mathcal{H}$. Suppose also that, for a suitable choice of the probability distribution $q=(q_0,\ldots,q_{d-1})$ and for $\alpha=1,\ldots,L$, the bases are  mutually equibiased in the sense of Eq. \eqref{MEB1}.  With the anti-diagonal Hermitian matrix $N_{k,d-k}$, defined by Eq. \eqref{N-k-d-k}, we have
\begin{eqnarray}
|N_{k,d-k}|=\bigg[\mu+2\sum_{m<m^\prime}^{d-1}q_{m}q_{m^\prime}\cos\Big(\frac{2\pi k (m-m^\prime)}{d}\Big)\bigg]^{1/2},
\end{eqnarray}
where $\mu=\sum_{j=0}^{d-1}q_j^2$ is the index of coincidence  of the distribution $q$. Define $\lambda_{\max}(N)=\max_{k}|N_{k,d-k}|$, where the maximum is taken over $k\in\{1,\ldots,(d-1)/2\}$ when $d$ is odd,  and over  $k\in\{1,\ldots,d/2\}$ when $d$ is even.

Let us now state the following lemma, which is essential for  the remainder of this section.
\begin{lemma}\label{Lemma-Reciprocal}
For a  two-qudit composite system $\mathcal{H}\otimes \mathcal{H}$, consider the  set of vectors
\begin{eqnarray}\label{two-qudit-phiAlpha}
\ket{\phi_k^{(\alpha)}}&=&\frac{1}{\sqrt{d}}\sum_{j=0}^{d-1}\omega^{kj}\ket{e_j^{(\alpha)}}\ket{e_j^{(\alpha)}}^\ast,
\end{eqnarray}
where $\omega=\e^{2\pi i/d}$,  $k=1,\ldots, d-1$,  and $\alpha=1,\ldots,L$. Here   $\ket{e_j^{(\alpha)}}^\ast$ denotes the complex conjugate of $\ket{e_j^{(\alpha)}}$ with respect to a definite basis, say the standard basis $\{\ket{e_j^{(1)}}\}_{j=0}^{d-1}$.
Then
\begin{enumerate}[label=(\alph*)]
\item\label{Lem-a}
The set of vectors \eqref{two-qudit-phiAlpha} is linearly independent and span an $L(d-1)$-dimensional subspace of $\mathcal{H}\otimes \mathcal{H}$ if and only if
\begin{equation}\label{Constraint-lambdaN}
\lambda_{\max}(N)|<1/(L-1).
\end{equation}
\item\label{Lem-b}
When \ref{Lem-a} holds,   the reciprocal vector  $\ket{\widetilde{\phi}_k^{(\alpha)}}$, corresponding to each vector $\ket{\phi_k^{(\alpha)}}$,  is defined by
\begin{equation}\label{reciprocal}
\ket{\widetilde{\phi}_{k}^{(\alpha)}}=\sum_{k^\prime=1}^{d-1}\sum_{\alpha^\prime=1}^{L}[G^{-1}]_{k^\prime k}^{(\alpha^\prime \alpha)} \ket{\phi_{k^\prime}^{(\alpha^\prime)}},
\end{equation}
where  $G^{-1}$ is the inverse of the Gram matrix $G$, defined by $G_{kk^\prime}^{(\alpha\alpha^\prime)}=\braket{\phi_k^{(\alpha)}}{\phi_{k^\prime}^{(\alpha^\prime)}}$, and  reciprocity means $\braket{\phi_k^{(\alpha)}}{\widetilde{\phi}_{k^\prime}^{(\alpha^\prime)}}=\delta_{kk^\prime}\delta^{(\alpha\alpha^\prime)}$.
\end{enumerate}
\end{lemma}
\begin{proof}
To prove  part \ref{Lem-a},  consider the Gram matrix $G$, defined by
\begin{eqnarray}\label{GramNkkp}
G_{kk^\prime}^{(\alpha\alpha^\prime)}&=&\braket{\phi_k^{(\alpha)}}{\phi_{k^\prime}^{(\alpha^\prime)}} \\ \nonumber
&=&N_{kk^\prime}+\delta_{\alpha\alpha^\prime}(\delta_{kk^\prime}-N_{kk^\prime}),
\end{eqnarray}
where $N_{kk^\prime}$ are entries of the matrix $N$, as
\begin{equation}\label{N-k-d-k-2}
N_{kk^\prime}=\frac{1}{d}\sum_{j,j^\prime=0}^{d-1}\omega^{k^\prime j^\prime-kj}q_{j\oplus j^\prime},
\end{equation}
  for $k,k^\prime=1,\ldots,d-1$. Clearly, $N$ is anti-diagonal and  its nonzero entries  can be simplified as Eq. \eqref{N-k-d-k}.
In Appendix  \ref{Appendix-Eigenvalues-G} we derive the eigenvalues of $G$;  in particular, its smallest and largest eigenvalues are $\lambda_{\min}(G)=1-(L-1)\lambda_{\max}(N)$ and $\lambda_{\max}(G)=1+(L-1)\lambda_{\max}(N)$, respectively. It follows therefore that the set of vectors \eqref{two-qudit-phiAlpha} is linearly independent if and only if the Gram matrix is positive definite, i.e., $\lambda_{\min}(G)>0$, thereby completing  the proof of part \ref{Lem-a}.

To prove  part \ref{Lem-b} of the lemma, suppose vectors \eqref{two-qudit-phiAlpha} are  linearly independent, i.e., $\lambda_{\max}(N)<1/(L-1)$. The reciprocal vectors can be expanded in terms of the original vectors as
\begin{equation}\label{reciprocal}
\ket{\widetilde{\phi}_{k}^{(\alpha)}}=\sum_{k^\prime=1}^{d-1}\sum_{\alpha^\prime=1}^{L}C_{kk^\prime}^{(\alpha\alpha^\prime)} \ket{\phi_{k^\prime}^{(\alpha^\prime)}}.
\end{equation}
By multiplying  $\bra{\phi_l^{(\beta)}}$ from the left in this equation, we get
\begin{eqnarray}
\delta_{kl}\delta^{(\alpha\beta)}&=&\sum_{k^\prime=1}^{d-1}\sum_{\alpha^\prime=1}^{L}C_{kk^\prime}^{(\alpha\alpha^\prime)}G_{l k^\prime}^{(\beta\alpha^\prime)} \\ \nonumber
&=&\left[CG^\textrm{T}\right]_{kl}^{(\alpha\beta)}.
\end{eqnarray}
This implies that $CG^\textrm{T}=\Id_{L(d-1)}$, or equivalently $C^\textrm{T}=G^{-1}$, where ${\textrm{T}}$ stands for the transposition. Accordingly, up to a transposition,  the matrix of the coefficients is obtained by the inverse of the Gram matrix $G$. This completes the proof of the lemma.
\end{proof}

\subsection{Inequality for the probabilities of projective measurements in MEBs of a qudit system}
The following theorem   holds.
\begin{theorem}\label{Theorem-EUR1}
Suppose that, for a suitable choice of the probability distribution $q=(q_0,\ldots,q_{d-1})$, and for $\alpha=1,\ldots,L$,  there exist $L$  MEBs in a  $d$-dimensional Hilbert space $\mathcal{H}$.  Then, for an arbitrary $d$-dimensional quantum state $\varrho$, the probability of obtaining the $j$th result when the system is measured on the $\alpha$th basis is given by $p_{j}^{(\alpha)}=\Tr[\varrho P_{j}^{(\alpha)}]$, where $P_j^{(\alpha)}=\ket{e_j^{(\alpha)}}\bra{e_j^{(\alpha)}}$.
For a set of $L$ MEBs that satisfy  $\lambda_{\max}(N)<1/(L-1)$, we have
\begin{eqnarray}\label{Theorem-Eq-EUR1}
\sum_{\alpha=1}^{{L}}\sum_{j=0}^{d-1}\left[p_{j}^{(\alpha)}\right]^2 \le f_L(q)\bigg[\Tr[\varrho^2]-\frac{1}{d}\bigg]+\frac{L}{d},
\end{eqnarray}
where
\begin{eqnarray}\label{fL(q)}
f_{L}(q)=1+(L-1)\lambda_{\max}(N).
\end{eqnarray}
Obviously, $|N_{k,d-k}|=0$ for MUBs for which   $\mu=1/d$, thus leading to the results of \cite{MolmerPRA2009}.
\end{theorem}

\begin{proof}
Building on the method presented in Ref. \cite{MolmerPRA2009},  we define a two-qudit pure state  $[\varrho\otimes \Id]\ket{\phi}$, where
$\varrho$ is an arbitrary quantum state on $\mathcal{H}$ and
\begin{eqnarray}\label{two-qudit-phi}
\ket{\phi}&=&\frac{1}{\sqrt{d}}\sum_{j=0}^{d-1}\ket{e_j^{(1)}}\ket{e_j^{(1)}}^\ast,
\end{eqnarray}
is a  maximally entangled state of  the  composite system $\mathcal{H}\otimes \mathcal{H}$.
One can easily show that $\braket{\phi}{\phi_k^{(\alpha)}}=0$ for $k=1,\ldots, d-1$ and  $\alpha=1,\ldots,L$. Invoking this and Lemma \ref{Lemma-Reciprocal}, vectors $\{\ket{\phi}, \ket{\phi_k^{(\alpha)}}\}$  span a $D$-dimensional subspace of $\mathcal{H}\otimes\mathcal{H}$ where $D=L(d-1)+1$. Defining  $\{\ket{\chi_l}\}_{l=1}^{d^2-D}$ as an orthonormal basis for the orthogonal complement of this subspace, and adding these vectors to the previously defined set of vectors we obtain, in general, a nonorthogonal basis for $\mathcal{H}\otimes \mathcal{H}$ as $\{\ket{\phi},\ket{\phi_k^{(\alpha)}},\ket{\chi_l}\}$ where  $k=1,\ldots, d-1$;  $\alpha=1,\ldots,L$; and $l=1,\ldots,d^2-D$.

By  expanding $[\varrho\otimes \Id]\ket{\phi}$  in terms of the aforementioned basis of $\mathcal{H}\otimes \mathcal{H}$,  we get
\begin{eqnarray}\label{Rho1Vector}
[\varrho\otimes \Id]\ket{\phi}=\frac{1}{d}\ket{\phi}+\sum_{\alpha=1}^{L}\sum_{k=1}^{d-1}[G^{-1}\Omega]_k^{(\alpha)}\ket{\phi_k^{(\alpha)}}+\ket{\chi},
\end{eqnarray}
where the first term follows from $\bra{\phi}[\varrho\otimes \Id]\ket{\phi}=1/d$ and the  unnormalized vector  $\ket{\chi}$ represents the component of $[\varrho\otimes \Id]\ket{\phi}$ that lies in the orthogonal subspace  spanned by $\{\ket{\chi_l}\}$.
For the second term, using  Lemma \ref{Lemma-Reciprocal}, we have $\bra{\widetilde{\phi}_k^{(\alpha)}}[\varrho\otimes \Id]\ket{\phi}=[G^{-1}\Omega]_k^{(\alpha)}$,
where    $G^{-1}$ refers to the inverse of the  Gram matrix $G$  defined by   Eq. \eqref{GramNkkp}. The  $L(d-1)$-dimensional vector $\Omega$ is defined by its components $\Omega_l^{(\beta)}$ as
\begin{eqnarray}\label{Omega-beta-l}
\Omega_l^{(\beta)}=\frac{1}{d}\sum_{j=0}^{d-1}{\omega^\ast}^{lj}p_{j}^{(\beta)},
\end{eqnarray}
for $\beta=1,\ldots,L$ and $l=1,\ldots,d-1$. Here  $p_{j}^{(\beta)}=\bra{{\phi}_j^{(\beta)}}\varrho\ket{{\phi}_j^{(\beta)}}$ is the probability of obtaining the $j$th result when the system is measured on the $\beta$th basis.

For the squared norm of the vector defined by Eq. \eqref{Rho1Vector}, we have
\begin{eqnarray}\label{Tr-rho2}
\bra{\phi}[\varrho^2\otimes \Id]\ket{\phi}&=& \frac{1}{d}\Tr[\varrho^2]  \\ \nonumber
&=& \frac{1}{d^2}+\Omega^\dagger G^{-1} \Omega +\braket{\chi}{\chi}.
\end{eqnarray}
From this  we get
\begin{eqnarray}\label{ineq1}
\frac{1}{d}\Big[\Tr[\varrho^2] -\frac{1}{d}\Big] &=& \Omega^\dagger G^{-1} \Omega +\braket{\chi}{\chi} \\ \nonumber
 &\ge &\Omega^\dagger G^{-1} \Omega \\ \nonumber
 &\ge & \|\Omega\|^2 \lambda_{\min}(G^{-1}),
\end{eqnarray}
where $\lambda_{\min}(G^{-1})$ is the smallest eigenvalue of $G^{-1}$ and the squared norm $\|\Omega\|^2=\Omega^\dagger \Omega$ can be expressed as
\begin{eqnarray}\label{OmegaDaggerOmega}
\Omega^\dagger \Omega &=&\sum_{\alpha=1}^{L}\sum_{k=1}^{d-1}[\Omega_k^{(\alpha)}]^\ast[\Omega_k^{(\alpha)}] \\ \nonumber
&=&\frac{1}{d^2}\sum_{j,j^\prime=0}^{d-1}\sum_{\alpha=1}^{L}\sum_{k=1}^{d-1}\omega^{k(j^\prime-j)}p_{j^\prime}^{(\alpha)}p_{j}^{(\alpha)} \\ \nonumber
&=&\frac{1}{d^2}\sum_{j,j^\prime=0}^{d-1}\sum_{\alpha=1}^{L}[d\delta_{jj^\prime}-1]p_{j^\prime}^{(\alpha)}p_{j}^{(\alpha)}
\\ \nonumber
&=&\frac{1}{d}\sum_{\alpha=1}^{L}\sum_{j=0}^{d-1}[p_{j}^{(\alpha)}]^2-\frac{L}{d^2}.
\end{eqnarray}
By putting  Eq. \eqref{OmegaDaggerOmega} into Eq. \eqref{ineq1}, we get
\begin{eqnarray}\nonumber
\sum_{\alpha=1}^{{L}}\sum_{j=0}^{d-1}\left[p_{j}^{(\alpha)}\right]^2 & \le & \frac{1}{\lambda_{\min}(G^{-1})}\bigg[\Tr[\varrho^2]-\frac{1}{d}\bigg]+\frac{L}{d} \\
&=& \lambda_{\max}(G)\bigg[\Tr[\varrho^2]-\frac{1}{d}\bigg]+\frac{L}{d},
\end{eqnarray}
where the last line follows from the fact that  the Gram matrix of a linearly independent set of vectors  is positive definite,  as such  $\lambda_{\max}(G)=1/\lambda_{\min}(G^{-1})$.
To complete the proof, it remains only to determine  the eigenvalues of the Gram matrix $G$. A complete derivation of the eigenvalues is provided in Appendix \ref{Appendix-Eigenvalues-G}, from which the maximum eigenvalue we need here is given by $\lambda_{\max}(G)=f_L(q)$, where $f_L(q)$ is defined by Eq. \eqref{fL(q)}. This concludes  the proof.
\end{proof}

In proving the theorem, we observe two different features of the $d(L-1)$-dimensional vector $\Omega$ and the $(d-1)\times(d-1)$ matrix $N$: The  vector $\Omega$ contains the probabilities of obtaining different outcomes when the system, initially prepared in state $\varrho$, is measured in MEBs, while the Gram matrix $G$, and thus  the  submatrix $N$, provides all information on the MEBs.   This last feature enables us to parameterize  the probability distribution  $q=(q_0,\ldots,q_{d-1})$ associated with  each MEB, using  $d-1$ independent parameters provided by the entries of the matrix $N$, as $N_{k,d-k}=\delta_{2k-1}+i\delta_{2k}$, where  $(d-1)$ real  parameters $\delta_{1},\ldots,\delta_{d-1}$ are defined  by Eqs. \eqref{delta-odd} and \eqref{delta-even}, respectively (see Sec.  \ref{Section-parameterization}).

Quantum incompatibility  imposes limitation on  the information that can be extracted from measurements of different observables, and the uncertainty relation quantifies how the incompatibility of measurements restricts the precision with which we can simultaneously obtain information about incompatible  observables. The entropic uncertainty relation provides a limit on how much information one can gain from  measuring incompatible measurements.
The inequality given by Theorem \ref{Theorem-EUR1} for probabilities of projective measurement in MEBs can be used to construct an  entropic uncertainty inequality.

\begin{theorem}\label{Theorem-EUR2}
With the same notation as Theorem \ref{Theorem-EUR1}, suppose there exist $L$ MEBs in a  $d$-dimensional Hilbert space $\mathcal{H}$ and that the probability of obtaining the $j$th result when the system, initially prepared in state $\varrho$, is measured on the $\alpha$th basis is given by $p_{j}^{(\alpha)}=\Tr[\varrho P_{j}^{(\alpha)}]$. Then by defining $H(\mathcal{B}^{(\alpha)}|\varrho)=-\sum_{j=0}^{d-1}p_{j}^{(\alpha)}\log{p_{j}^{(\alpha)}}$ as the Shannon entropy of the probability distribution $\{p_{j}^{(\alpha)}\}_{j=0}^{d-1}$, we have
\begin{eqnarray}\nonumber
\sum_{\alpha=1}^{{L}}H(\mathcal{B}^{(\alpha)}|\varrho) &\ge & [L-KC](K+1)\log{(K+1)} \\
&-&[L-(K+1)C]K\log{K},
\end{eqnarray}
where $K=\lfloor \frac{L}{C} \rfloor$, and $C$ is the upper  bound for $\sum_{\alpha=1}^{{L}}\sum_{j=0}^{d-1}\big[p_{j}^{(\alpha)}\big]^2$. Here,  for a set of $L$ MEBs that satisfy the condition   $\lambda_{\max}(N)<1/(L-1)$,  $C=f_L(q)\big[\Tr[\varrho^2]-\frac{1}{d}\big]+\frac{L}{d}$ is the upper bound given in Eq. \eqref{Theorem-Eq-EUR1}.
\end{theorem}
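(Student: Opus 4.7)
The plan is to combine the index-of-coincidence bound from Theorem~\ref{Theorem-EUR1} with a per-basis Shannon-entropy inequality and then invoke Jensen's inequality together with monotonicity. First, I would apply Theorem~\ref{Theorem-EUR1} to obtain the global constraint $\sum_{\alpha=1}^L s_\alpha \le C$, where $s_\alpha = \sum_{j=0}^{d-1}[p_j^{(\alpha)}]^2$ is the index of coincidence of basis $\alpha$ and $C = f_L(q)[\Tr(\varrho^2)-1/d]+L/d$; since each basis is normalized, $s_\alpha \in [1/d,1]$.

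The main step is a per-basis chord inequality: for each integer $k\in\{1,\ldots,d-1\}$ and every probability distribution $p$ on $d$ outcomes with purity $s=\sum_j p_j^2 \in [1/(k+1),1/k]$,
\begin{equation*}
H(p)\;\ge\;(k+1)\log(k+1)-k\log k - k(k+1)\log\!\Bigl(\tfrac{k+1}{k}\Bigr)s.
\end{equation*}
To prove this I would observe that, at fixed purity $s$ in this interval, the minimum-entropy distribution has support $k+1$ and the two-value structure $(a,b,\ldots,b)$---with $k$ equal entries $b$ and a single entry $a$---uniquely determined by $\sum p_j=1$ and $\sum p_j^2 = s$; implicit differentiation of $H$ along this family shows $d^2H/ds^2<0$ on the open interval, so the minimum-entropy function $\phi(s)$ is concave there. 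Because $\phi(1/(k+1))=\log(k+1)$ and $\phi(1/k)=\log k$, concavity places $\phi$ above its chord, which is exactly the displayed right-hand side.

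Let $\tilde\psi:[1/d,1]\to\mathbb{R}$ denote the piecewise linear interpolation through the nodes $(1/k,\log k)$, $k=1,\ldots,d$, so the above reads $H(\mathcal{B}^{(\alpha)}|\varrho)\ge\tilde\psi(s_\alpha)$. A short calculation shows its slope on $[1/(k+1),1/k]$ equals $-k(k+1)\log((k+1)/k)$, which becomes less negative as $k$ decreases; therefore $\tilde\psi$ is globally convex and strictly decreasing. Setting $\bar s = L^{-1}\sum_\alpha s_\alpha \le C/L$, Jensen's inequality together with monotonicity gives
\begin{equation*}
\sum_{\alpha=1}^{L}H(\mathcal{B}^{(\alpha)}|\varrho)\;\ge\;\sum_\alpha \tilde\psi(s_\alpha)\;\ge\;L\,\tilde\psi(\bar s)\;\ge\;L\,\tilde\psi(C/L).
\end{equation*}

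A direct evaluation of $L\tilde\psi(C/L)$ with $K=\lfloor L/C\rfloor$ and $C/L\in[1/(K+1),1/K]$ yields $(K+1)(L-KC)\log(K+1)+K((K+1)C-L)\log K$, which is exactly the right-hand side of the theorem once one rewrites $(K+1)C-L=-[L-(K+1)C]$. I expect the per-basis chord inequality to be the hardest step, since it is essentially a one-basis version of the statement itself; the concavity argument for $\phi$ goes through cleanly on the interior of each subinterval, but handling the boundary transitions---where the support size of the extremizing distribution changes between $k$ and $k+1$---requires care. A convenient alternative is to cite the classical S\'anchez-Ruiz/Harremo\"es--Tops\o e inequality, which packages precisely this per-basis bound.
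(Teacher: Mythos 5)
Your proposal is correct and takes essentially the same route as the paper: the paper's proof simply defers to the argument of Ref.~\cite{MolmerPRA2009} (which rests on the Harremo\"{e}s--Tops{\O}e entropy/index-of-coincidence inequality) with the MEB bound $C$ from Theorem~\ref{Theorem-EUR1} substituted, and your per-basis chord inequality, convex piecewise-linear minorant, and Jensen-plus-monotonicity steps are precisely that argument written out explicitly. As you yourself note, the safest handling of the per-basis chord inequality is to cite Harremo\"{e}s--Tops{\O}e \cite{HarremosIEEE2001}, which is exactly what the paper does.
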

\begin{proof}
The proof follows exactly the same steps as   Ref. \cite{MolmerPRA2009}, where the result was established for MUBs based on an inequality by Harremo\"{e}s and Tops{\o}e \cite{HarremosIEEE2001}. Since our proof is identical in structure except for the use of a different  upper  bound $C$ for $\sum_{\alpha=1}^{{L}}\sum_{j=0}^{d-1}\big[p_{j}^{(\alpha)}\big]^2$, we omit the details here and refer the reader to Ref. \cite{MolmerPRA2009} for the complete argument.
\end{proof}

\section{Entanglement detection}\label{Sec-EntDetection}
A powerful tool for detecting entanglement is the use of positive but not completely positive maps.
A linear  map $\Phi:\mathcal{B}(\mathcal{H}_1)\rightarrow \mathcal{B}(\mathcal{H}_2)$ is called positive if $\Phi X\ge 0\; \forall X\ge 0$, where $\mathcal{B}(\mathcal{H})$ denotes the space of bounded operators acting on the Hilbert space  $\mathcal{H}$.
The map $\Phi$ is called completely positive if and only if  its trivial extension $\mathrm{id}_n\otimes \Phi$ also remains positive for all dimension $n$.
For a given positive but not completely positive  map $\Phi$, one can construct the corresponding entanglement witness $W=(d-1)(\textrm{id}_n\otimes \Phi)\ket{\phi_d^{+}}\bra{\phi_d^{+}}$, where $\ket{\phi_d^{+}}=\frac{1}{\sqrt{d}}\sum_{i=0}^{d}\ket{ii}$. It turns out that $W$ has nonnegative expectation value on any separable state $\rho$, i.e. $\Tr{\rho W}\ge 0$  for all separable states. Accordingly, a state for which $\Tr{\rho W}<0$ is entangled. For a given set of $L$ MUBs,  the authors of \cite{ChruscinskiPRA2018}  introduced the   positive and trace-preserving  map
\begin{equation}\label{Phi-map-MUB}
\phi X = \phi_{*}X - \dfrac{1}{d-1}\sum_{\alpha=1}^{{L}}\sum_{k,l=0}^{d-1}\mathcal{O}_{k,l}^{(\alpha)}\Tr(\tilde{X}P_{l}^{(\alpha)})P_{k}^{(\alpha)},
\end{equation}
where   $P_k^{(\alpha)}$, for $k=0,\ldots,d-1$, forms a projective measurement on the $\alpha$th basis,
 $\phi_{*}X=\dfrac{1}{d}\Id \Tr X$ defines the completely depolarizing channel and  $\tilde{X}=X-\phi_{*}X$ denotes the traceless part of X. Also, $\mathcal{O}^{(b)}$ is a set of orthogonal rotation in $\mathbb{R}^d$ around the axis $\boldsymbol{n}_\ast=(1,1,\ldots,1)/\sqrt{d}$ so that $\mathcal{O}^{(b)}\boldsymbol{n}_\ast=\boldsymbol{n}_\ast$.
The corresponding entanglement witness reads  \cite{ChruscinskiPRA2018}
\begin{eqnarray}\label{WitnessE}
W= \dfrac{d +L-1}{d} \Id \otimes \Id - \sum_{\alpha=1}^{{L}}\sum_{k,l=0}^{d-1}\mathcal{O}_{kl}^{(\alpha)} \overline{P}_{l}^{(\alpha)}\otimes P_{k}^{(\alpha)},
\end{eqnarray}
where $\overline{P}_{l}^{(\alpha)}$ is the conjugation of $P_{l}^{(\alpha)}$ and $1\le L \le d+1$.
The following proposition presents a  positive   map in terms of a set of $L$ MEBs for a $d$-dimensional system.

\begin{proposition}
Suppose   $P_k^{(\alpha)}=\ket{e_k^{(\alpha)}}\bra{e_k^{(\alpha)}}$ for $k=0,\ldots,d-1$ forms a projective measurement on the $\alpha$th basis of a given set of $L$ MEBs. For a set of $L$ MEBs that satisfy   $\lambda_{\max}(N)<1/(L-1)$, the map
\begin{eqnarray}\label{Phi-map-MEB}
\Phi X &=&\left[\frac{d+\gamma L-1}{d(d-1)}\right]\Tr[X]\Id_d \\ \nonumber
&-& \frac{\gamma}{d-1}\sum_{\alpha=1}^{{L}}\sum_{k,l=0}^{d-1}\mathcal{O}_{k,l}^{(\alpha)}\Tr[X P_{l}^{(\alpha)}]P_{k}^{(\alpha)},
\end{eqnarray}
is positive and trace preserving.
Her  $\gamma$ is given by
\begin{eqnarray}\label{gamma}
\gamma=\bigg[\frac{d-1}{(d-1)f_L(q)+L(L-1)(dq_{\max}-1)}\bigg]^{1/2},
\end{eqnarray}
where $q_{\max}=\max_{k}\{q_0,\ldots,q_{d-1}\}$ and $f_L(q)$   is defined by Eq. \eqref{fL(q)}.
Also, $\mathcal{O}^{(b)}$ is a set of orthogonal rotation in $\mathbb{R}^d$ around the axis $\boldsymbol{n}_\ast=(1,1,\ldots,1)/\sqrt{d}$ so that $\mathcal{O}^{(b)}\boldsymbol{n}_\ast=\boldsymbol{n}_\ast$.
\end{proposition}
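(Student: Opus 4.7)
The plan is to verify trace preservation by an elementary calculation using the orthogonality of $\mathcal{O}^{(\alpha)}$, and then to establish positivity by extracting the traceless part of an auxiliary operator whose largest eigenvalue is then estimated via the MEB relation~\eqref{MEB1} and Theorem~\ref{Theorem-EUR1}.

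For trace preservation, observe that $\mathcal{O}^{(\alpha)}\boldsymbol{n}_\ast=\boldsymbol{n}_\ast$ together with orthogonality of $\mathcal{O}^{(\alpha)}$ gives $\sum_k\mathcal{O}_{k,l}^{(\alpha)}=\sum_l\mathcal{O}_{k,l}^{(\alpha)}=1$; taking $\Tr$ of $\Phi X$ and using $\Tr P_k^{(\alpha)}=1$ and $\sum_l P_l^{(\alpha)}=\Id_d$, the two terms collapse to $\Tr[X]$. For positivity, I would introduce
\begin{equation}
T(X)=\sum_{\alpha=1}^L\sum_{k,l=0}^{d-1}\mathcal{O}_{k,l}^{(\alpha)}\Tr[X P_l^{(\alpha)}]\,P_k^{(\alpha)},
\end{equation}
compute its trace $\Tr[T(X)]=L\Tr[X]$ via the same row-sum identity, and decompose $T(X)=(L\Tr[X]/d)\Id_d+T_{\perp}(X)$ into its scalar and traceless parts. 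The map then simplifies to $\Phi X=\Tr[X]\Id_d/d-\gamma T_{\perp}(X)/(d-1)$, so that $\Phi X\ge 0$ for every density matrix $\varrho$ is equivalent to $\lambda_{\max}(T_{\perp}(\varrho))\le(d-1)/(d\gamma)$.

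The analytic core is the sharp inequality $\lambda_{\max}(M)\le\sqrt{(d-1)\Tr[M^2]/d}$ valid for every traceless Hermitian $M$, obtained by Lagrange extremization under $\sum_i\lambda_i=0$ and fixed $\sum_i\lambda_i^2$. Applied to $M=T_{\perp}(\varrho)$ this reduces the question to an upper bound on $\Tr[T(\varrho)^2]$. Expanding this double sum and substituting the MEB relation~\eqref{MEB1}, the $\alpha=\beta$ contributions collapse, via $\sum_k\mathcal{O}_{k,l}^{(\alpha)}\mathcal{O}_{k,l'}^{(\alpha)}=\delta_{ll'}$, to $\sum_{\alpha,l}[p_l^{(\alpha)}]^2$, while the $\alpha\ne\beta$ contributions carry the factor $q_{\sigma(k)\oplus\sigma(l)}\le q_{\max}$ and, after the row-sum identity, are bounded by $L(L-1)q_{\max}$. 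Invoking Theorem~\ref{Theorem-EUR1} with $\Tr[\varrho^2]\le 1$ yields $\sum_{\alpha,l}[p_l^{(\alpha)}]^2\le[f_L(q)(d-1)+L]/d$; subtracting the scalar contribution $L^2/d$ then gives
\begin{equation}
\Tr[T_{\perp}(\varrho)^2]\le\frac{(d-1)f_L(q)+L(L-1)(dq_{\max}-1)}{d}.
\end{equation}
Inserting this into the traceless inequality and demanding $\lambda_{\max}(T_{\perp})\le(d-1)/(d\gamma)$ produces exactly the value of $\gamma$ stated in Eq.~\eqref{gamma}.

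The hard part is the tightness of the eigenvalue bound: the coarser Hilbert--Schmidt estimate $\lambda_{\max}(M)\le\sqrt{\Tr[M^2]}$ would yield a strictly smaller and more restrictive $\gamma$, so the traceless refinement is essential. A further subtlety is that the permutation $\sigma$ in~\eqref{MEB1} can depend on the pair $(\alpha,\beta)$, but since only the uniform bound $q_{\sigma(k)\oplus\sigma(l)}\le q_{\max}$ is invoked, the argument is insensitive to this dependence. As a consistency check, in the MUB limit $q_k=1/d$ one has $f_L(q)=1$ and $dq_{\max}-1=0$, so $\gamma=1$ and the map reduces to the Chru\'sci\'nski \etal construction~\eqref{Phi-map-MUB}.
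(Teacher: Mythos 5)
Your proof is correct and follows essentially the same route as the paper: your sharp eigenvalue bound for traceless Hermitian matrices is precisely the Mehta's lemma criterion the paper invokes (the paper applies it directly to $\Tr[(\Phi P)^2]$ for pure states, you to the traceless part $T_\perp$, which is algebraically the same reduction), and you bound the resulting double sum identically --- the same-basis block via Theorem \ref{Theorem-EUR1} and the cross-basis block by $L(L-1)q_{\max}$ --- arriving at the same $\gamma$. The only differences are presentational (scalar-plus-traceless decomposition versus direct computation of $\Tr[(\Phi P)^2]$), so no further comment is needed.
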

\begin{proof}
First note that the map $\Phi$, defined by Eq.  \eqref{Phi-map-MEB},
is trace preserving, i.e., $\Tr[\Phi X]=\Tr[X]$.
To show that it is also positive, we require that $\Phi X\ge 0$ for all $X\ge 0$. According to  Mehta's lemma, a Hermitian matrix $A$ is positive if $\Tr[A^2]\le \Tr[A]/(d-1)$.
In view of this, the map $\Phi$ is positive if and only if $\Tr[\Phi \varrho ]^2\le \Tr[\varrho]/(d-1)$ for all rank-1 projection operators $P=\ket{\psi}\bra{\psi}$.
After some tedious but straightforward calculations one finds that
\begin{eqnarray}
\Tr[\Phi P]^2=\left[\frac{(d-1)^2-\gamma^2L^2}{d(d-1)^2}\right]+\frac{\gamma^2}{(d-1)^2}T,
\end{eqnarray}
where the expression  $T$ is defined as
\begin{eqnarray}
T&=& \sum_{\alpha=1}^{{L}}\sum_{l=0}^{d-1}\left\{\Tr[P P_{l}^{(\alpha)}]\right\}^2 \\ \nonumber
&+&\sum_{\alpha\ne \alpha^\prime}^{{L}}\sum_{k,l=0}^{d-1}\sum_{k^\prime,l^\prime=0}^{d-1}\mathcal{O}_{k,l}^{(\alpha)}\mathcal{O}_{k^\prime,l^\prime}^{(\alpha^\prime)}\Tr[P P_{l}^{(\alpha)}]\Tr[P P_{l^\prime}^{(\alpha^\prime)}]q_{k\oplus k^\prime}.
\end{eqnarray}
Remarkably, for a set of $L$ MUBs, the second term of the above expression reduces to $L(L-1)/d$ and the first term is bounded from above by $1+(L-1)/d$ so that the  condition $\Tr[\Phi P]^2\le 1/(d-1)$ is satisfied for $\gamma=1$. For  MEBs, on the other hand, the second term is bounded from above by $L(L-1)q_{\max}$ where $q_{\max}=\max_{k}\{q_k\}$. For the fist term, however, the upper bound is given by
\begin{equation}
\sum_{\alpha=1}^{{L}}\sum_{l=0}^{d-1}\left\{\Tr[P_{\psi} P_{l}^{(\alpha)}]\right\}^2  \le \frac{(d-1)f_L(q)+L}{d},
\end{equation}
where $f_L(q)$ is defined by Eq. \eqref{fL(q)} (see Theorem \ref{Theorem-EUR1} and its proof). Putting everything together, one can find that the map $\Phi$ is positive if and only if $\gamma$ is given by Eq. \eqref{gamma}.
This completes the proof that the map $\Phi$ is positive and trace preserving.
\end{proof}

Corresponding to the map $\Phi$, defined by Eq. \eqref{Phi-map-MEB}, one can define the entanglement witness as 
\begin{eqnarray}\label{WitnessMEB}
W= \dfrac{d +\gamma L-1}{d} \Id \otimes \Id - \gamma \sum_{\alpha=1}^{{L}}\sum_{k,l=0}^{d-1}\mathcal{O}_{kl}^{(\alpha)} \overline{P}_{l}^{(\alpha)}\otimes P_{k}^{(\alpha)}.
\end{eqnarray}
For illustration, we consider an example of isotropic states of a $d\times d$ system, defined by
\begin{equation}\label{isotropicState}
\varrho_{I}=\omega\ket{\phi^{+}_d}\bra{\phi^{+}_d}+\frac{1-\omega}{d^2}\Id_d\otimes \Id_d,
\end{equation}
 where $0\le \omega \le 1$ and  $\ket{\phi_d^{+}}=\frac{1}{\sqrt{d}}\sum_{i=0}^{d}\ket{ii}$. It is shown  that isotropic states are entangled if and only if $\omega > 1/(d+1)$. Straightforward calculations show that
 \begin{eqnarray}\nonumber
\Tr[\varrho_{I}W] &=& \frac{d+L \gamma  \omega -1}{d}-\frac{\gamma \omega}{d}\sum_{\alpha=1}^{L}\sum_{k,l=0}^{d-1}\mathcal{O}_{kl}^{(\alpha)}\Tr[P_l^{(\alpha)}P_{k}^{(\alpha)}] \\ \nonumber
&=& \frac{d+L \gamma  \omega -1}{d}-\frac{\gamma \omega}{d}\sum_{\alpha=1}^{L}\Tr[\mathcal{O}^{(\alpha)}] \\ \nonumber
&=&\frac{(d-1)}{d}[1-L \gamma\omega],
 \end{eqnarray}
where in the the second  equality we have used Eq. \eqref{MEB1}, and the third  equality follows from the fact that the optimum detection happens when $\Tr[\mathcal{O}^{(\alpha)}]$ takes its maximum value $d$, i.e., when $\mathcal{O}^{(\alpha)}=\Id_d$ for all $\alpha$. It turns out  that $W$ can detect the entanglement of  isotropic states when  $\omega >\frac{1}{L \gamma }$. Accordingly,  entanglement detection depends on the product of two parameters $L$ and $\gamma$:  how many pairs of MEBs are used to detect entanglement and how effective the measurements are. This provides a tradeoff relation between the number of measurements and their  effectiveness in the sense that the more incompatible the measurements are,  the fewer measurements are needed for detection. Clearly, complete entanglement detection is possible when a full set of $d+1$ measurements on MUBs is performed.

Figure \ref{WitnessGamma}  illustrates this quantity-quality tradeoff in terms of the index  $\mu$  for $d=3$ and different values of   $L=2,3,4$. In this case, $\lambda_{\max}(N)=\sqrt{(3\mu-1)/2}$; as such,  the constraint \eqref{Constraint-lambdaN} leads to $\mu<(2/(L-1)^2+1)/3$. This yields the bounds $\mu<1$, $1/2$, and $11/27$, for $L=2$, $3$, and $4$, respectively.   In all cases, the solid curves represent $(L\gamma)^{-1}$  when MEBs are used to construct witnesses. The dashed lines, on the other hand, capture the bounds when a set of $L$ MUBs is used to construct witnesses so that $\gamma=1$. Not surprisingly, for a fixed $L$ the MUB-based witness is finer than the MEB-based witness; however, if a MEB-based witness employs a larger set of measurements, it can surpass the MUB-based witness in detecting entanglement, particularly within the specific index  $\mu$ of the underlying distribution $q$ of  the MEBs.

\begin{figure}[h]
\includegraphics[scale=0.95]{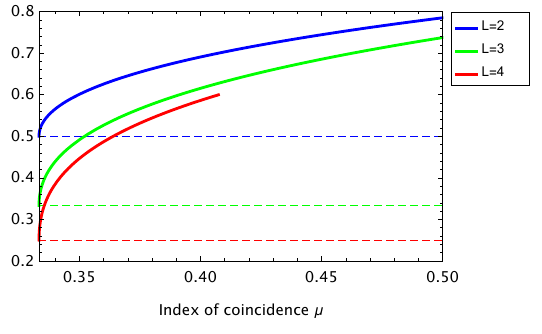}
\caption{Plot of $(L\gamma)^{-1}$ in terms of index $\mu$ for $d=3$ and different values of $L=2,3,4$. For equal $L$'s, the MUB-based witness (dashed lines) always captures a wider range of states relative to the MEB-based witness (solid curves). However, if a MEB-based witness employs a larger set of measurements, it can surpass the MUB-based witness in detecting entanglement, particularly within low  values of the index $\mu$. As discussed in the  text, for $L=4$, the plot is restricted to $\mu\in[1/3,11/27)$, due to the imposed limit on $\mu$.  }
\label{WitnessGamma}
\end{figure}

The inequality $\lambda_{\max}(N)<1/(L-1)$ imposes a tradeoff relation on the MEBs, in the sense that for a given set of MEBs associated with a probability distribution $q=(q_0,\ldots,q_{d-1})$, only $L<1/\lambda_{\max}(N)+1$ of them are relevant  for entanglement detection, even if more than $L$ could  be constructed. The constraint does not make sense for MUBs, as $N_{k,d-k}=0$ for the uniform distribution $q=(1/d,\ldots,1/d)$, so they are already optimal. For MEBs, however, it arises   from the possibility of linear  dependence among  the maximally entangled states in Eq. \eqref{two-qudit-phi}, which renders certain bases redundant for the entanglement detection. In view of this,   the  significance of MEBs is pronounced when their distribution is sufficiently  close  to the uniform distribution. Despite the constraint imposed on the MEBs, they may still be valuable in certain  dimensions, such as $d=6$, where the maximum number of three  MUBs has been identified. This raises the question of whether it is possible for a set of at least four  MEBs to exist in dimension $d=6$ and, if such a set is found, can it be sufficiently close to the uniform distribution such that  the constraint does not exclude any of the bases. This possibility merits further investigation.

\section{Conclusion}\label{Sec-Conclusion}
In this work we  introduced the concept of mutually equibiased bases, a generalization of the mutually unbiased bases. For such bases, we relaxed  the condition of mutual  maximum unpredictability  that holds for MUBs; instead, we associated   a probability distribution $q=(q_0,\ldots,q_{d-1})$ with  each of the MEBs. This distribution captures  the predictability of the outcomes of a measurement in one basis when the system is prepared in a state from  the other basis. However, the information extracted from the measurement is unbiased, meaning that  it cannot reveal which  state  from  which  basis   the system is prepared in and  which basis is used for the measurement.

We derived a  set of  $d+1$ MEBs for $d=2$ and $d=3$   and found that  although  for $d=2$   a complete set of three  MEBs exist for all values of the index $\mu=\sum_{k=0}^{d-1}q_k^2$, for $d\ge 3$ this  is not the case. More precisely,  for  $d=3$, the constraints imposed by the mutual equibiasedness limits the range of the index $\mu$ to the feasible region  $1/3\le \mu \le 1/2$ within which a set of four MEBs exists. Furthermore, in addition to this restriction on $\mu$, for a given $\mu$ in the feasible region the components of the probability distribution $q=(q_0,q_1,q_2)$ are also limited to some specific discrete points; only one probability distribution $q$, up to permutation of its components, can generate a  set of four MEBs.

For a qudit system, we derived an inequality  for  the probabilities of projective measurements in MEBs. We further used the inequality  to derive   an  entropic uncertainty inequality in MEBs. As expected, the maximal incompatibility inherent to  MUBs breaks down in the case of MEBs of the same size.
The inequality  for  probabilities is then used to obtain a class of positive maps, from which we constructed  entanglement witnesses. Using the MEBs for entanglement detection faces a constraint; depending on the underlying probability distribution $q$, the number of bases that can be employed from a set of $L$ MEBs is subject to a limit. This reveals  a key  distinction between MEBs and MUBs in such applications.

We examined our entanglement witness for $d\times d$  isotropic states and, in the case $d=3$, we evaluated it using a set of $L$ MEBs; we presented the results through illustrative plots for $L=2,3,4$.
We showed that in the absence of a complete set of MUBs, having a set of MEBs with at least one additional basis can  enhance the construction of finer entanglement witnesses.

These results provide a deeper understanding of the structure and properties of MEBs, offering new tools for entanglement detection and quantum measurement theory.
The challenging dimension  $d=6$ is currently under investigation to determine whether there exist values of the index $\mu$ for which one can construct a set of at least four MEBs, exceeding a set of  three MUBs currently known for this dimension.  We hope that this work will serve as a stepping stone for further research, ultimately advancing our understanding of MUBs. In particular, for dimensions  where no more than three MUBs are known, even a set of four  MEBs with $\mu > 1/d$ could shed light on the challenging open problem of constructing complete set of MUBs.

\section*{acknowledgment}
This work was based upon research funded by Iran National Science Foundation  under Project No. 4021765.


\appendix

\section{Proof of Proposition \ref{Proposition-d3} }\label{Appendix-ProofProposition}
\begin{proof}
Having the first unitary matrix $\mathcal{U}$, we have to find two other unitary matrices $\mathcal{V}$ and $\mathcal{W}$ with the same moduli of  matrix elements as $\mathcal{U}$. The second unitary matrix  $\mathcal{V}$ can be constructed  as $\mathcal{V}=\mathcal{D}_{\alpha}\mathcal{U}$ where $\mathcal{D}_{\alpha}=\diag\{1,\e^{i\alpha_1},\e^{i\alpha_2}\}$.
Regarding   the probability $q=(q_0,q_1,q_2)$,  although both $\mathcal{U}$ and $\mathcal{V}$ generate bases that are equibiased with respect to the standard basis, they are not necessarily equibiased with respect to each other. For this to hold, $\mathcal{U}$ and $\mathcal{V}$ must be mutually equibiased, i.e., the unitary matrix $\mathcal{V}^\dagger\mathcal{U}$  must generate the same probability matrix $\mathcal{Q}$, up to a permutation of its entries.
This places some constraints on phases $\alpha_1$ and $\alpha_2$.

To proceed with this claim, we have to compute the matrix $\mathcal{V}^\dagger\mathcal{U}$ and require  that the squared moduli of its diagonal entries generate the probability  distribution $q^\sigma=(q_{\sigma(0)},q_{\sigma(1)},q_{\sigma(2)})$, such that
\begin{eqnarray}\label{|VdaggerU|2}
|[\mathcal{V}^\dagger\mathcal{U}]_{00}|^2=q_{\sigma(0)},\;  |[\mathcal{V}^\dagger\mathcal{U}]_{11}|^2=q_{\sigma(2)},\;  |[\mathcal{V}^\dagger\mathcal{U}]_{22}|^2=q_{\sigma(1)}.
\end{eqnarray}
Here $\sigma\in S_3$ denotes a permutation of three elements $\{0,1,2\}$, where  $S_3$ is the symmetric group of degree $3$.  For clarity,  the six  operations of $S_3$  can be written  as  the identity $\mathrm{id}$, the transpositions $(01)$, $(02)$, $(12)$, and the forward and backward cycles $(012)$ and $(210)$.
After some calculations, we get
\begin{eqnarray}\label{3constraints-1}
\mu+2q_0q_1\cos{\alpha_1}+2q_0q_2\cos{\alpha_2}+2q_1q_2\cos{\alpha_{12}}=q_{\sigma(0)}, \\ \label{3constraints-2}
\mu+2q_1q_2\cos{\alpha_1}+2q_0q_1\cos{\alpha_2}+2q_0q_2\cos{\alpha_{12}}=q_{\sigma(2)}, \\ \label{3constraints-3}
\mu+2q_0q_2\cos{\alpha_1}+2q_1q_2\cos{\alpha_2}+2q_0q_1\cos{\alpha_{12}}=q_{\sigma(1)},
\end{eqnarray}
where we have defined $\alpha_{12}=\alpha_1-\alpha_2$, for the sake of brevity.
Summing the three equations above, we obtain
\begin{equation}\label{AlphaCondition}
\cos{\alpha_1}+\cos{\alpha_2}+\cos{\alpha_{12}}=\frac{1-3\mu}{1-\mu},
\end{equation}
which holds for  $\mu\in[1/3,5/9]$.

Equation  \eqref{AlphaCondition} imposes  a constraint on the allowed values of the phases $\alpha_1$ and $\alpha_2$ under which two unitary matrices $\mathcal{U}$ and $\mathcal{V}$ are mutually equibiased. A contour plot of this equation in $\alpha_1\alpha_2$-plane for different values of  $\mu$    is illustrated in Fig. \ref{FIG-AlphaCondition}. The contours that are bounded within the  hexagon correspond to $\mu\in [1/3, 1/2]$ and those outside the hexagon, i.e., within the two outer triangles,  correspond to $\mu\in (1/2,5/9]$.

\begin{figure}[t]
\includegraphics[scale=0.8]{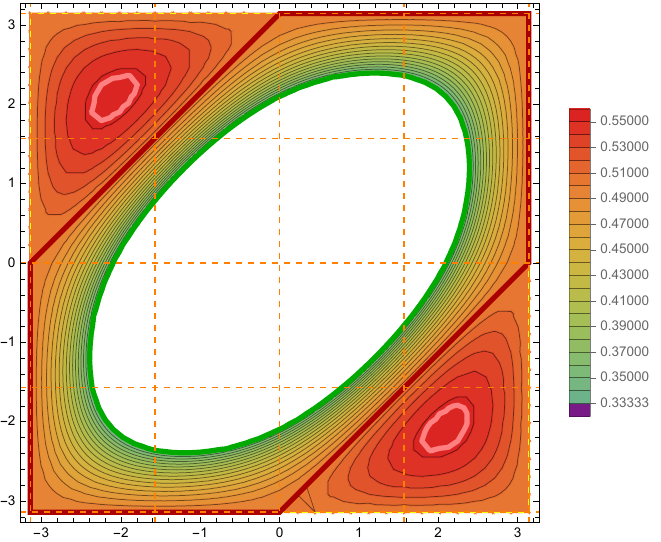}
\caption{Contour plot of the left-hand side  of Eq. \eqref{AlphaCondition} in $\alpha_1\alpha_2$ plane for  $\alpha_1,\alpha_2\in(-\pi,+\pi)$.  (a) The inner boundary (green ellipse) corresponds to $\mu=1/3$, (b) the outer boundary (red hexagon) corresponds to $\mu=1/2$, and (c) two little pink contours are plotted  for  $\mu\approx5/9$. The exact value for $\mu=5/9$ correspond to two points  at $(2\pi/3,-2\pi/3)$ and $(-2\pi/3,2\pi/3)$.   The $\alpha_1\alpha_2$ plane is divided by dashed grid lines to make it easier to identify the quadrant of the angles $\alpha_1$ and $\alpha_2$.  }
\label{FIG-AlphaCondition}
\end{figure}

Clearly, the constraint imposed by Eq. \eqref{AlphaCondition} is invariant under the symmetric group $S_3$; it  depends  only on  $\mu$ but not on where  the probability $q=(q_0,q_1,q_2)$ is located in the feasible region.  However, a look at the left-hand sides  of Eqs. \eqref{3constraints-1}-\eqref{3constraints-3} reveals that this set is  invariant  under a subgroup of $S_3$, namely, the cyclic group $C_3=\{\mathrm{id},(012),(210)\}$.
The set, however, will possess the full symmetry under $S_3$ if and only if two out of the three terms $\cos{\alpha_1}$, $\cos{\alpha_2}$, and $\cos{\alpha_{12}}$ become  equal. This results in three possible cases as
\begin{enumerate}
  \item[(a) : ] $\alpha_1=0,\quad\quad  \alpha_2=\theta^{\mu}$,
  \item[(b) : ] $\alpha_1=\theta^{\mu},\quad\; \alpha_2=0$,
  \item[(c) : ] $\alpha_1=\theta^{\mu}, \quad\; \alpha_2=\theta^{\mu}$,
\end{enumerate}
where, from Eq. \eqref{AlphaCondition}, $\theta^{\mu}$ is obtained as Eq. \eqref{theta-mu2}. The index condition  $\mu\le 1/2$ is obtained from  $|\cos{\theta^{\mu}}|\le 1$.   Using this in Eqs. \eqref{3constraints-1}-\eqref{3constraints-3}, we get for  case (a) the  set of equations
\begin{eqnarray}
q_{\sigma(0)}=\frac{2q_0q_1}{1-\mu}, \quad q_{\sigma(2)}=\frac{2q_1q_2}{1-\mu},\quad q_{\sigma(1)}=\frac{2q_0q_2}{1-\mu}.
\end{eqnarray}
Depending on the choice of the permutation $\sigma\in S_3$, the   set of  equations determines the probability distribution $q=(q_0,q_1,q_2)$. More precisely, the choice   $\sigma=\mathrm{id}$ fixes $q_1=q^{\mu}_0$, so that $q=(q^{\mu}_{\pm},q^{\mu}_0,q^{\mu}_{\mp})$, where $q^{\mu}_0=(1-\mu)/2$ and $q^{\mu}_{\pm}$ is given in Eq. \eqref{qpm-mu}.      Similarly, the  cycle  $\sigma=(012)$  fixes $q_0=q^{\mu}_0$, giving  $q=(q^{\mu}_0,q^{\mu}_{\pm},q^{\mu}_{\mp})$, while the cycle  $\sigma=(210)$ yields  $q=(q^{\mu}_{\pm},q^{\mu}_{\mp},q^{\mu}_0)$. In the case of  the transpositions  $(01)$, $(02)$, and $(12)$,  the set possesses  solution only for $\mu=1/3$, corresponding to the uniform distribution $q=(1/3,1/3,1/3)$.

In a similar manner,  for cases (b) and (c), the set of equations \eqref{3constraints-1}-\eqref{3constraints-3} reduce to
\begin{eqnarray}
q_{\sigma(0)}=\frac{2q_0q_2}{1-\mu}, \quad q_{\sigma(2)}=\frac{2q_0q_1}{1-\mu},\quad q_{\sigma(1)}=\frac{2q_1q_2}{1-\mu},
\end{eqnarray}
and
\begin{eqnarray}
q_{\sigma(0)}=\frac{2q_1q_2}{1-\mu}, \quad q_{\sigma(2)}=\frac{2q_0q_2}{1-\mu},\quad q_{\sigma(1)}=\frac{2q_0q_1}{1-\mu},
\end{eqnarray}
respectively. In each case, the corresponding set of equations gives, up to a permutation,  the same set of solutions  \eqref{q0-mu} and \eqref{qpm-mu} for the probabilities $(q_0,q_1,q_2)$.  Irrespective of which case (a), (b), or (c) is considered, only three out of six possible permutations $\sigma$, namely, those corresponding to the  cyclic group $C_3=\{\mathrm{id},(012),(210)\}$,  are  contributed in the solutions. Table \ref{Table-Alpha1} presents all possible solutions of Eqs. \eqref{3constraints-1}-\eqref{3constraints-3}, corresponding to the three cases (a), (b), and (c) and three permutations $\mathrm{id}$, $(012)$,  and $(210)$.

\begin{table}[h!]
  \begin{center}
    \caption{Possible solutions of Eqs. \eqref{3constraints-1}-\eqref{3constraints-3}.}\label{Table-Alpha1}
    \begin{tabular}{|l|c|c|c|} \cline{2-4}
     \multicolumn{1}{@{}c|}{}      &
 $\sigma=\mathrm{id}$ & $\sigma=(012)$  & $\sigma=(210)$  \\   \hline
$\alpha=(0,\theta^{\mu})$ & $q=(q^{\mu}_{\pm},q^{\mu}_0,q^{\mu}_{\mp})$ & $q=(q^{\mu}_0,q^{\mu}_{\pm},q^{\mu}_{\mp})$  &    $q=(q^{\mu}_{\pm},q^{\mu}_{\mp},q^{\mu}_0)$   \\ \hline
$\alpha=(\theta^{\mu},0)$  &  $q=(q^{\mu}_{\pm},q^{\mu}_{\mp},q^{\mu}_0)$ & $q=(q^{\mu}_{\pm},q^{\mu}_0,q^{\mu}_{\mp})$  &   $q=(q^{\mu}_0,q^{\mu}_{\pm},q^{\mu}_{\mp})$    \\ \hline
$\alpha=(\theta^{\mu},\theta^{\mu})$ & $q=(q^{\mu}_0,q^{\mu}_{\pm},q^{\mu}_{\mp})$ & $q=(q^{\mu}_{\pm},q^{\mu}_{\mp},q^{\mu}_0)$  &    $q=(q^{\mu}_{\pm},q^{\mu}_0,q^{\mu}_{\mp})$   \\ \hline
    \end{tabular}
  \end{center}
\end{table}

To complete the characterization of the set of MEBs, we have to construct the third unitary matrix $\mathcal{W}$ as $\mathcal{W}=\mathcal{D}_{\beta}\mathcal{U}$ where
$\mathcal{D}_{\beta}=\diag\{1,\e^{i\beta_1},\e^{i\beta_2}\}$. Similar to the arguments given above, we require that the unitaries  $\mathcal{W}\mathcal{U}$ and $\mathcal{W}\mathcal{V}$ generate the unistochastic matrix $\mathcal{Q}$, up to a permutation of its rows and columns. The first requirement is achieved if we select  the pair $(\beta_1, \beta_2)$ from   case (a), (b), or (c),  mentioned above for $(\alpha_1, \alpha_2)$. To ensure  that $\mathcal{W}\mathcal{V}$ also generates the unistochastic matrix $\mathcal{Q}$, we have to choose the pair  $(\beta_1, \beta_2)$ distinct  from the pair $(\alpha_1, \alpha_2)$.
\end{proof}

\begin{table}[h!]
  \begin{center}
    \caption{Correspondence between  possible choices for the distributions $q=(q_0,q_1,q_2)$ (first column), possible choices for the pairs $(\alpha_{1},\alpha_{2})$ and $(\beta_{1}, \beta_{2})$ (first row), and the corresponding three equivalent distributions  for which three unitary matrices   $\mathcal{V}^\dagger \mathcal{U}$, $\mathcal{W}^\dagger \mathcal{U}$, and $\mathcal{W}^\dagger \mathcal{V}$ generate equivalent unistochastic matrices.  }
    \label{Table-AlphaBeta1}
    \begin{tabular}{|c|c|c|c|} \cline{2-4}
     \multicolumn{1}{@{}c|}{}     &
$\begin{array}{ll}
\alpha=(0, \theta^{\mu}) \\
\beta=(\theta^{\mu}, 0)
\end{array}$
 &
$\begin{array}{ll}
\alpha=(0, \theta^{\mu}) \\
\beta=(\theta^{\mu}, \theta^{\mu})
\end{array}$
 &  $\begin{array}{ll}
\alpha=(\theta^{\mu},0) \\
\beta=(\theta^{\mu}, \theta^{\mu})
\end{array}$ \\   \hline
$(q^{\mu}_{+},q^{\mu}_{0},q^{\mu}_{-})$ & $q^{},q^{(012)},q^{(210)}$ & $q^{},q^{(210)},q^{(012)}$  &    $q^{(012)},q^{(210)},q^{}$   \\ \hline
$(q^{\mu}_{+},q^{\mu}_{-},q^{\mu}_{0})$  &  $q^{(210)},q^{},q^{(012)}$ & $q^{(210)},q^{(012)},q^{}$   &   $q^{},q^{(012)},q^{(210)}$    \\ \hline
$(q^{\mu}_{0},q^{\mu}_{+},q^{\mu}_{-})$ & $q^{(012)},q^{(210)},q^{}$ & $q^{(012)},q^{},q^{(210)}$  &    $q^{(210)},q^{},q^{(012)}$   \\ \hline
$(q^{\mu}_{0},q^{\mu}_{-},q^{\mu}_{+})$  &  $q^{(012)},q^{(210)},q^{}$ & $q^{(012)},q^{},q^{(210)}$   &   $q^{(210)},q^{},q^{(012)}$    \\  \hline
$(q^{\mu}_{-},q^{\mu}_{+},q^{\mu}_{0})$ & $q^{(210)},q^{},q^{(012)}$ & $q^{(210)},q^{(012)},q^{}$  &    $q^{},q^{(012)},q^{(210)}$   \\ \hline
$(q^{\mu}_{-},q^{\mu}_{0},q^{\mu}_{+})$  &  $q,q^{(012)},q^{(210)}$ & $q,q^{(210)},q^{(012)}$   &   $q^{(012)},q^{(210)},q$    \\ \hline
    \end{tabular}
  \end{center}
\end{table}

As we have seen, the mutual equibiasedness imposes some constraints on the  index $\mu=q_0^2+q_1^2+q_2^2$  of the distribution $q=(q_0,q_1,q_2)$, meaning that, not any value of $\mu$  is allowed to generate a bistochastic matrix from  the unitary matrix $\mathcal{V}^\dagger\mathcal{U}$. The following remark clarifies  how the feasible region is derived.

\begin{remark}\label{Remark-FeasibleRegion}
The feasible region within the probability triangle is determined by applying a sequence of constraints arising from physical considerations. These conditions successively restrict the parameter space, resulting in a subset that satisfies all necessary properties. Figure  \ref{FIG-Triangle1} outlines how the feasible region is constructed in this case,  with each panel  depicting a specific constraint imposed on the probability  triangle,  as detailed below. \\
Figure \ref{fig:a} shows a  subset of the  probability triangle that satisfies the chain-link conditions \eqref{L-inequalities}.  Corresponding to each probability distribution $q=(q_0,q_1,q_2)$ within the subset, the  associated unitary matrix $\mathcal{U}$  generates the corresponding unistochastic matrix $\mathcal{Q}$. \\
 For $\mu\in[1/3,1)$, the right-hand side of  Eq. \eqref{AlphaCondition} can take any nonpositive  value $(-\infty,0]$, but the left-hand side  is restricted to the interval $[-3/2,3]$. This poses restrictions on the index $\mu$ in the sense that  only distributions with index  $\mu\in[1/3,5/9]$ can generate unistochastic matrices from $\mathcal{V}^\dagger\mathcal{U}$. The resulting feasible region is depicted in  Fig. \ref{fig:b}.\\
Equations   \eqref{3constraints-1}-\eqref{3constraints-3} pose some chain-link conditions on the four segments $q_0$, $q_1$, $q_2$, and $\sqrt{q_i}$ as well, i.e., the nontrivial  conditions $q_i\le q_j+q_k+\sqrt{q_j}$ hold, for $(i,j,k)$ taken from $(0,1,2)$. After imposing  these  constraints, the resulting feasible domain   is reduced to the region shown in   Fig. \ref{fig:c}.\\
Finally,   Eq. \eqref{theta-mu2}, together with the condition $|\cos{\theta^{\mu}}|\le 1$, leads to  the constraint $\mu\le 1/2$.   Figure \ref{fig:d} illustrates the  feasible region corresponding to $\mu\in [1/3,1/2]$, consistent with all previously imposed constraints.
\end{remark}

\begin{figure}[t]
  \centering
\begin{subfigure}[b]{0.23\textwidth}
    \centering
    \includegraphics[scale=0.45]{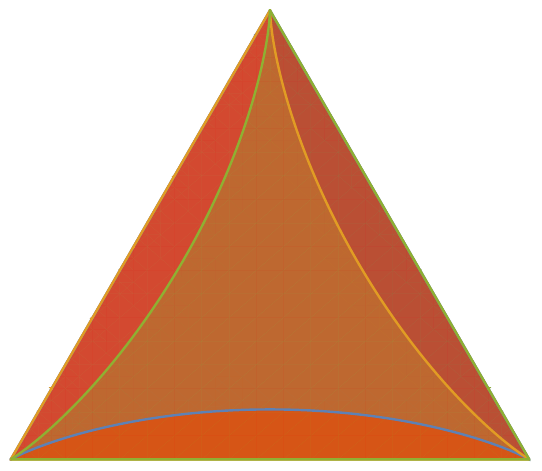}
   \caption{}
    \label{fig:a}
  \end{subfigure}
  \hfill
  \begin{subfigure}[b]{0.23\textwidth}
    \centering
    \includegraphics[scale=0.45]{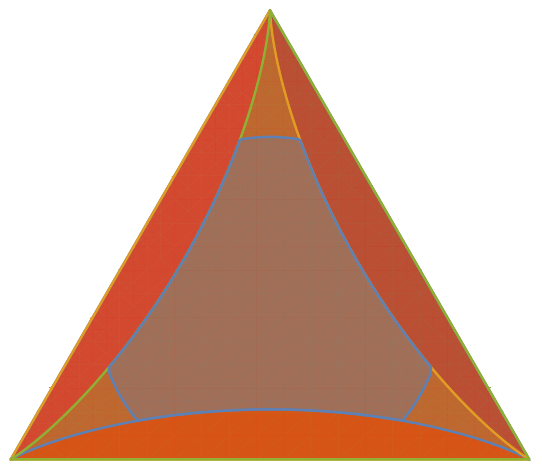}
   \caption{}
    \label{fig:b}
  \end{subfigure}
  \vspace{0.5cm}
  \begin{subfigure}[b]{0.23\textwidth}
    \centering
    \includegraphics[scale=0.45]{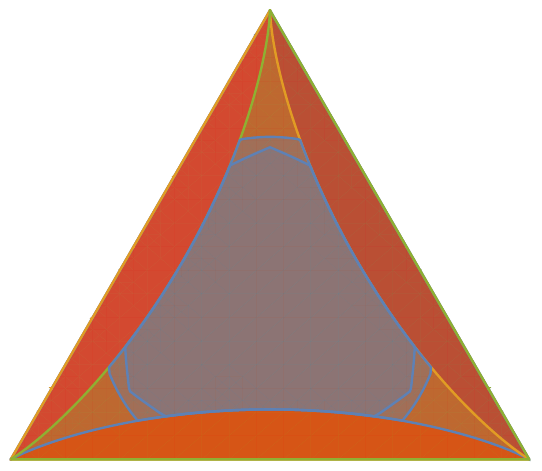}
   \caption{}
    \label{fig:c}
  \end{subfigure}
  \hfill
  \begin{subfigure}[b]{0.23\textwidth}
    \centering
    \includegraphics[scale=0.45]{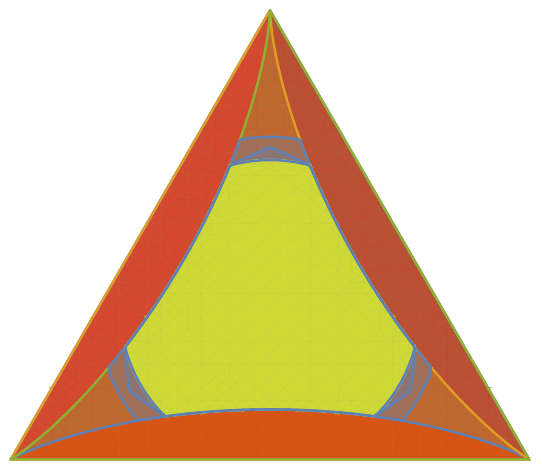}
    \caption{}
    \label{fig:d}
  \end{subfigure}
\caption{Geometric representation of the probability distribution $q=(q_0,q_1,q_2)$ and the construction of the feasible region. (a) Equilateral triangle associated with the probability distribution $q=(q_0,q_1,q_2)$, with its unistochastic subset. All plots are  in the $\delta_2\delta_1$ plane. Each panel  corresponds to a constraint applied in succession. See Remark \ref{Remark-FeasibleRegion} for details.  }
\label{FIG-Triangle1}
\end{figure}

\section{Derivation of the eigenvalues of  $G$}\label{Appendix-Eigenvalues-G}
To proceed with the  calculation of eigenvalues of $G$, we first note that the Gram matrix $G$ can be represented in a block form as
\begin{equation}\label{}
G=\begin{pmatrix}
  \Id_{d-1} & N_{d-1} & \cdots & N_{d-1} \\
 N_{d-1} & \Id_{d-1} & \cdots & N_{d-1} \\
  \vdots & \vdots & \ddots & \vdots \\
  N_{d-1} & N_{d-1} & \cdots & \Id_{d-1}
\end{pmatrix},
\end{equation}
where $\Id_{d-1}$ is the unit matrix of dimension  $(d-1)$ and the $(d-1)\times(d-1)$ matrix $N_{d-1}$, defined by Eq. \eqref{N-k-d-k-2}, forms the blocks of the  Gram matrix $G$. Note here that, to maintain clarity, we explicitly denote the size of each submatrix with a subscript. This simple block form of $G$ allows us to provide the  direct product representation
\begin{equation}\label{Gdirectproduct}
G=\Id_{L}\otimes \Id_{d-1}+S_{L}\otimes N_{d-1},
\end{equation}
where $S_{L}$ is an $L\times L$ matrix with all off-diagonal entries equal to unity, and the diagonal entries are equal to zero. Both $S_L$ and $N_{d-1}$ can be diagonalized unitarily  as
\begin{eqnarray}
S_{L}=Q_L D_L Q_L^\dagger, \qquad N_{d-1}=U_{d-1} \Lambda_{d-1} U_{d-1}^\dagger.
\end{eqnarray}
Here
\begin{eqnarray}
D_L&=&\diag\{L-1,-1,\ldots,-1\}, \\
\Lambda_{d-1}&=&\diag\{\lambda_1,-\lambda_1, \ldots,\lambda_{\lfloor \frac{d-1}{2} \rfloor},-\lambda_{\lfloor \frac{d-1}{2} \rfloor}, \lambda_{\frac{d}{2}}\},
\end{eqnarray}
where $\lambda_{i}=|N_{i,d-i}|$ and $\lfloor \frac{d-1}{2} \rfloor$ denotes the integral part of $\frac{d-1}{2}$, i.e., $\lfloor\frac{d-1}{2}\rfloor=\frac{d-1}{2}$ if $d$ is odd, and  $\lfloor \frac{d-1}{2} \rfloor=\frac{d}{2}-1$ if $d$ is even. Note that all $d-1$ eigenvalues of $N_{d-1}$ occur in positive-negative pairs, except for the last one, which appears alone. Obviously, this lone eigenvalue, i.e.,  $\lambda_{d/2}$,  exists only when $d$ is even.

Having these diagonal forms, one can write $G$ as
\begin{equation}\label{Gdirectproduct}
G=[Q_L\otimes U_{d-1}]G_\textrm{diag}[Q_L^\dagger \otimes U_{d-1}^\dagger],
\end{equation}
where $G_\textrm{diag}$ is the diagonal representation of the Gram matrix $G$,
\begin{equation}\label{Gdirectproduct}
G_\textrm{diag}=\Id_{L}\otimes \Id_{d-1}+D_{L}\otimes \Lambda_{d-1}.
\end{equation}
To classify the eigenvalues of $G$, we discuss even and odd cases, separately.

\textit{Eigenvalues of $G$ when $d$ is odd.} In this case,  for $i\in\{1,\ldots,(d-1)/2\}$,  the nondegenerate eigenvalues are
\begin{eqnarray}
1\pm(L-1)|N_{i,d-i}|,
\end{eqnarray}
while the degenerate ones, each with multiplicity $L-1$, are
\begin{eqnarray}
1\pm|N_{i,d-i}|.
\end{eqnarray}

\textit{Eigenvalues of $G$ when $d$ is even.} In this case, the eigenvalues are the same as above, here for $i\in\{1,\ldots,d/2-1\}$, along with  the  two new eigenvalues
 $1+(L-1)|N_{d/2,d/2}|$ and  $1-|N_{d/2,d/2}|$, which have multiplicities $1$ and $L-1$, respectively.

For both cases of even and odd  $d$, and $L\ge 2$, we have
\begin{eqnarray}
\lambda_{\max}(G)&=&1+(L-1)\max_{i}|N_{i,d-i}|,\\
\lambda_{\min}(G)&=&1-(L-1)\max_{i}|N_{i,d-i}|,
\end{eqnarray}
for the largest and smallest eigenvalue of the Gram matrix $G$, respectively. The positive definiteness of the Gram matrix requires that $\max_{i}|N_{i,d-i}|<1/(L-1)$.


\end{document}